\newtheorem{prop}{Proposition}
\newtheorem{lem}{Lemma}
\newtheorem{thm}{Theorem}
\newtheorem{remark}{Remark}
\title{Neural Network Gradient Hamiltonian Monte Carlo}
\date{}
\author{Lingge Li, Andrew Holbrook, Babak Shahbaba, Pierre Baldi\\
University of California, Irvine}
\begin{document}

\maketitle

\begin{abstract}


Hamiltonian Monte Carlo is a widely used algorithm for sampling from posterior distributions of complex Bayesian models. It can efficiently explore high-dimensional parameter spaces guided by simulated Hamiltonian flows. However, the algorithm requires repeated gradient calculations, and these computations become increasingly burdensome as data sets scale. We present a method to substantially reduce the computation burden by using a neural network to approximate the gradient. First, we prove that the proposed method still maintains convergence to the true distribution though the approximated gradient no longer comes from a Hamiltonian system. Second, we conduct experiments on synthetic examples and real data to validate the proposed method.

\end{abstract}

\section{Introduction}



Hamiltonian Monte Carlo (HMC) uses local geometric information provided by the log-posterior gradient to explore the high posterior density regions of the parameter space \cite{neal2011mcmc}. Compared to the Metropolis-Hastings random walk algorithm, HMC has high acceptance probability and low sample auto-correlation even when the parameter space is high-dimensional. That said, the advantages of HMC come at a computational cost that limits its application to smaller data sets. The gradient calculation involves the entire data set and scales linearly with the number of observations. As HMC needs to calculate the gradient multiple times within every single step, performing HMC on millions of observations requires an enormous computational budget. Allowing HMC to scale to large data sets would help tackle the double challenge of big data and big models.

There have been two main approaches to scaling HMC to larger data sets. The first is stochastic gradient HMC, which calculates the gradient on subsets of the data. \cite{welling2011bayesian} implemented a stochastic gradient version of Langevin Dynamics, which may be viewed as single-step HMC. \cite{chen2014stochastic} introduced stochastic gradient HMC with ``friction'' to counterbalance the inherently noisy gradient. However, these methods may not be optimal because subsampling substantially reduces the acceptance probability of HMC \cite{betancourt2015fundamental}. 

The second approach relies on a surrogate function, the gradient of which is less expensive to calculate. 
\cite{rasmussen2003gaussian}\cite{lan2016emulation} used Gaussian process (GP) to produce satisfactory results in lower dimensions. However, training a GP is itself computationally expensive and training points must be chosen with great care. More recently, \cite{zhang2015hamiltonian} implemented neural network surrogate with random bases. It outperforms GP surrogate in their experiments but fails in parameter spaces of moderate dimensionality.

In this paper, we develop a third approach, neural network gradient HMC (NNgHMC), by using a neural network to directly approximate the gradient instead of using it as a surrogate. We also train all the neural network weights through backpropagation rather than having random weights \cite{zhang2015hamiltonian}. Compared to existing methods, our proposed approach can emulate Hamiltonian flows accurately even when dimensionality increases. In Section 3, details of our method and proof of convergence are presented. Section 4 includes experiments to validate our method and comparisons with previous methods on synthetic and real data.

\section{Background}

\subsection{Hamiltonian Monte Carlo}
Let $x\sim \pi(x|q)$ denote a probabilistic model with $q$ as its corresponding parameter. We also make $q$ a random variable by giving the parameter a prior distribution $\pi(q)$. The integration constant of the posterior distribution
\begin{align}
 \pi(q|x)=\frac{\pi(x|q)\pi(q) }{\int \pi(x|q)\pi(q) \, dq} 
\end{align}
 is usually analytically intractable, but the distribution can be numerically simulated using MCMC. The Metropolis-Hastings algorithm constructs a Markov chain that randomly proposes a new state $q'$ from current state $q$ based on transition distribution $g(q'|q)$ and moves from $q$ to $q'$ with probability $\min\{1, \frac{\pi(q'|X)g(q|q')}{\pi(q|X)g(q'|q)}\}$. Unfortunately, in a higher dimensional space, the probability of randomly moving to $q'$ drops dramatically. Therefore, the MH algorithm has trouble exploring the posterior efficiently in higher dimensions.

The idea of HMC is to explore a frictionless landscape induced by potential energy function $U$ and kinetic energy function $K$ where potential energy $U(q)=-\log \pi(x|q)\pi(q)$ is proportional to the negative log posterior. HMC introduces an auxiliary Gaussian momentum $p$, and $K(p)$ is the negative log density of $p$. Potential energy $U$ tends to convert to kinetic energy $K$ so $q$ will likely move to a position with higher posterior density. More formally, the Hamiltonian system is defined by the following equations.
\begin{align}
H(q, p) &= U(q) + K(p) = - \Big(\log \pi(q)+ \sum_{i=1}^N \log \pi(x_i|q) \Big) + \frac{1}{2} p^Tp\, ,\\ 
\frac{dq}{dt}&=\frac{\partial H}{\partial p}=\frac{\partial K}{\partial p}=p\label{eq::theta_update} \\ 
\frac{dp}{dt}&=-\frac{\partial H}{\partial q}=-\frac{\partial U}{\partial q}=  \nabla_q \Big(\log \pi(q)+ \sum_{i=1}^N \log \pi(x_i|q) \Big)\label{eq::p_update}  \, .
\end{align}
In theory, convergence of HMC is guaranteed by the time reversibility of the Hamiltonian dynamics which, in turn, renders the Markov chain transitions reversible, thus ensuring detailed balance.   By conservation of the Hamiltonian, HMC has acceptance probability 1 and can travel arbitrarily long trajectories along energy level contours. In practice, the Hamiltonian dynamics is simulated with the leapfrog algorithm which adds numerical errors. To ensure convergence to the posterior, a Metropolis correction step is used at the end of each trajectory. 

Within each simulated trajectory, the leapfrog algorithm iterates back and forth between Equations \eqref{eq::theta_update} and \eqref{eq::p_update}, the latter of which features a summation over the log-likelihood evaluated at \emph{each} separate data point. For large data sets, this repeated evaluation of the gradient becomes infeasible. In Section \ref{NNgHMC}, we show how to greatly speed up HMC using neural network approximations to this gradient term, but first we introduce an important predecessor to our method, the surrogate HMC class of algorithms.

\subsection{Surrogate HMC}


Two methods for approximating the log-posterior in the HMC context have already been advanced.  The first uses a Gaussian Process regression to model the log-posterior, the second uses a neural network. We refer to the latter as neural network surrogate HMC (NNsHMC). It is natural that both models would be used in such a capacity: Cybenko \cite{cybenko1989approximation} showed that neural networks can provide universal function approximation, and Neal \cite{neal2012bayesian} showed that certain probabilistic neural networks converge to Gaussian processes as the number of hidden units goes to infinity. In this section, we focus on NNsHMC, since it is more closely related to our method (Section \ref{NNgHMC}).

NNsHMC approximates the potential energy $U$ with a neural network surrogate $\widehat{U}$ and uses $\nabla \widehat{U}$ during leapfrog steps. The surrogate neural network has one hidden layer with softplus activation: 
\begin{align}
\widehat{U}(q)=W_2\ln(1+\exp(W_1q+b_1))+b_1
\end{align}
where $W_1,W_2$ and $b_1,b_2$ are weight matrices and bias vectors, respectively. Under this formulation, one can explicitly calculate the gradient
\begin{align}
\nabla\widehat{U}=W_1^Tdiag(W_2)\frac{1}{1+\exp{(-(W_1q+b_1))}}
\end{align}
and represent $\nabla\widehat{U}$ with another neural network, which is just the backpropagation graph of $\widehat{U}$. Therefore, we can view neural network surrogate as using a constrained network with tied weights and local connections to approximate the gradient. 

For training the neural network, Zhang et al \cite{zhang2015hamiltonian} uses extreme learning machine (ELM) \cite{huang2004extreme}. ELM is a simple algorithm that randomly projects the input to the hidden layer and only trains the weights from the hidden layer to the output. Random projection is widely used in machine learning but backpropagation is the ``default" training method for most neural networks with its optimality theoretically explained by Baldi et al \cite{baldi2016theory}. Moreover, since the goal is to improve computational efficiency, we want to make the surrogate neural network as small as possible. From this point of view, large hidden layers often seen in ELMs are less than optimal.

\section{Neural network gradient HMC}\label{NNgHMC}

\begin{algorithm}
\caption{Neural network gradient HMC}
\begin{algorithmic}
\State Initialize $q^{(0)}$, leapfrog step number $L$ and step size $\epsilon$
\For{$t = 1,2,...,T$}
\State $q_0=q^{(t-1)}$
\State Sample momentum $p_0\sim \mathcal{N}(0, I)$
\State $p_0=p_0-\frac{\epsilon}{2}\widehat{\nabla U}(q_t)$ \Comment{Leapfrog steps with approximated gradient $\widehat{\nabla U}$ instead of $\nabla U$}
\For{$i = 1,2,...,L$}
\State $q_i=q_{i-1}+\epsilon p_{i-1}$
\State $p_i=p_{i-1}-\epsilon \widehat{\nabla U}(q_i)$
\EndFor
\State $p_L=p_L-\frac{\epsilon}{2}\widehat{\nabla U}(q_L)$
\State $r=\exp{(H(q_L,p_L)-H(q_0, p_0))}$, $u\sim Uniform(0,1)$
\If{$u<\min(1,r)$} \Comment{Metropolis accept/reject based on $H=U+K$}
\State $q^{(t)}=q_L$
\Else
\State $q^{(t)}=q_0$
\EndIf
\EndFor  
\end{algorithmic}
\end{algorithm}

In contrast to previous work, NNgHMC does not use a surrogate function for $U$ but fits a neural network to approximate $\nabla U$ directly with backpropagation. Training data $(q,\nabla U(q))$ for the neural network are collected during the early period of HMC shortly after convergence. Once the approximate gradient is learned, the algorithm is exactly the same as classical HMC, but with neural network gradient $\widehat{\nabla U}$ replacing $\nabla U$.  Details are given in Algorithm 1.

One benefit of our method occurs as early as the data collection process.  Since we approximate the gradient $\nabla U$ and not $U$, we can collect more training data faster: surrogate HMC must reach the end of a leapfrog trajectory before obtaining a single (functional evaluation) training sample; the same leapfrog trajectory renders a new (gradient evaluation) training sample for each leapfrog step, and the number of such steps in a single trajectory can range into the hundreds. 

Suppose that there are $N$ data points $x_n$ and that the parameter space is $d$-dimensional.  In this case, gradient calculations involve $d$ partial derivatives
\begin{align}
 \frac{\partial U}{\partial q_j} = -\frac{\partial}{\partial q_j} \log \Big( \pi(q) \prod_{i=1}^N \pi(x_i|q)\Big) = -\frac{\partial}{\partial q_j} \log \pi(q) - \sum_{i=1}^N \frac{\partial}{\partial q_j} \log \pi(x_i|q)  \, ,
\end{align}
each of which involves a summation over the $N$ data points. On the other hand, performing a forward pass in a shallow neural network is proportional only to the hidden layer size $s\ll N$. Once the neural network is trained on burn-in samples, posterior sampling with approximated gradient is orders of magnitude faster.

Although the neural network gradient approximation $\widehat{\nabla U}(q)$ is not the same as $\nabla U(q)$, the method nonetheless samples from the true posterior. If one were able to simulate the Hamiltonian system directly, i.e. without numerical integration, then all the benefits of HMC would be preserved in the limit, as the gradient field approximates the true gradient field to arbitrary degree. On the other hand, the NNgHMC transition kernel---characterized by the approximate gradient leapfrog integrator combined with the Metropolis accept-reject step---leaves the posterior distribution invariant.  We formalize the relevant results here and defer proofs to the appendix.

An important litmus test for the validity of our method is that it should leave the Hamiltonian invariant in the limit as step-sizes and gradient approximation errors approach zero. In turn, this result will imply high acceptance probabilities when the system is simulated from numerically, and when gradient approximations are good.
\begin{prop}
When the system induced by the approximate gradient field is simulated directly, changes in the Hamiltonian $H(q,p)=U(q) +K(p)$ converge in probability to 0 as the approximate gradient converges pointwise to the true gradient. That is, for a sequence of approximate gradient fields $\{\widehat{\nabla_q^n U}\}_{n=1}^\infty$ converging to the true gradient field $\nabla_qU$, the change in Hamiltonian values satisfies
\begin{align}
\Big(\frac{dH}{dt}\Big)_n = o_p(1) \, .
\end{align}
\end{prop}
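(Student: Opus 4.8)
The plan is to reduce everything to the chain rule plus a stability estimate for ODE solutions. When the system is simulated directly, the perturbed flow is $\dot q = p$, $\dot p = -\widehat{\nabla_q^n U}(q)$, and differentiating $H(q,p) = U(q) + \tfrac12 p^\top p$ along this flow gives
\begin{align}
\Big(\frac{dH}{dt}\Big)_n = \nabla_q U(q)^\top \dot q + p^\top \dot p = p^\top\big(\nabla_q U(q) - \widehat{\nabla_q^n U}(q)\big),
\end{align}
so that by Cauchy--Schwarz $\big|(\tfrac{dH}{dt})_n\big| \le \|p_n(t)\|\;\big\|\nabla_q U(q_n(t)) - \widehat{\nabla_q^n U}(q_n(t))\big\|$, where $(q_n(t),p_n(t))$ solves the perturbed system on a fixed time horizon $[0,T]$ with random initial data $q_0$ and $p_0 \sim \mathcal{N}(0,I)$. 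The whole statement now amounts to showing this upper bound is $o_p(1)$: the momentum factor must not blow up, and the gradient-error factor must be small on the region actually visited by the trajectory.

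First I would control where the trajectory goes. Compare $(q_n,p_n)$ on $[0,T]$ with the \emph{exact} Hamiltonian trajectory $(\bar q,\bar p)$ launched from the same $(q_0,p_0)$: the exact flow conserves $H$, and (assuming $U$ is coercive, as it is for a proper posterior) the level set $\{H = H(q_0,p_0)\}$ is compact, so $(\bar q,\bar p)$ stays in a compact set $\mathcal K_0$ depending only on $(q_0,p_0)$. A Gr\"onwall estimate---using that $\nabla_q U$ is Lipschitz and that the approximating family $\{\widehat{\nabla_q^n U}\}$ is equi-Lipschitz on compacts (true for the bounded-weight networks with smooth activations used here)---bounds $\sup_{t\le T}\big(\|q_n(t)-\bar q(t)\| + \|p_n(t)-\bar p(t)\|\big)$ by a constant (depending on $T$ and $\mathcal K_0$) times $\sup_{q\in\mathcal K_0'}\|\widehat{\nabla_q^n U}(q) - \nabla_q U(q)\|$ on a slightly enlarged compact $\mathcal K_0'$. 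Pointwise convergence together with equicontinuity (Arzel\`a--Ascoli) upgrades the hypothesis to uniform convergence on $\mathcal K_0'$, so this quantity tends to $0$; hence for $n$ large the perturbed trajectory remains in $\mathcal K_0'$ and $\|p_n(t)\|$ is bounded by a constant depending only on $(q_0,p_0)$.

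Putting the two pieces together, $\big|(\tfrac{dH}{dt})_n\big| \le C(q_0,p_0)\cdot\sup_{q\in\mathcal K_0'}\|\widehat{\nabla_q^n U}(q) - \nabla_q U(q)\| \to 0$ for almost every realization of $(q_0,p_0)$, which in particular yields convergence in probability, giving $(\tfrac{dH}{dt})_n = o_p(1)$. The main obstacle is the confinement step: pointwise convergence of the gradient fields by itself is useless until one rules out the perturbed trajectory wandering into regions where the approximation is poor, so the Gr\"onwall-plus-compact-energy-surface argument and the mild regularity it rests on (coercivity of $U$, Lipschitz $\nabla_q U$, equi-Lipschitz approximants) carry the real weight; the identity for $dH/dt$ and the Cauchy--Schwarz bound are routine.
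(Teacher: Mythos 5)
Your derivation of the key identity $\big(\tfrac{dH}{dt}\big)_n = p^\top\big(\nabla_q U(q) - \widehat{\nabla_q^n U}(q)\big)$ is exactly the computation the paper performs, but from that point on you take a genuinely different route. The paper stops at this identity, writes the error as $E_n(q)$ with the \emph{a priori} assumption that $E_n(q)$ lies in a ball of radius $1/n$ uniformly in $q$, and then uses the fact that the momentum is drawn as $p\sim\mathcal N(0,I)$ to conclude $p^\top E_n \sim \mathcal N(0, E_n^\top E_n) = O_p(\|E_n\|) = o_p(1)$; the randomness doing the work is the Gaussianity of $p$ at the moment of resampling, and the authors explicitly concede the result is only local (indeed, for $t>0$ the momentum is no longer $\mathcal N(0,I)$ and $E_n(q(t))$ is not independent of $p(t)$, so their distributional claim really only holds at $t=0$). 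You instead bound $|p^\top(\nabla_q U - \widehat{\nabla_q^n U})|$ deterministically by Cauchy--Schwarz and then do the harder work of confining the perturbed trajectory: energy conservation plus coercivity gives a compact set for the exact flow, Gr\"onwall controls the deviation of the perturbed flow, and Arzel\`a--Ascoli upgrades pointwise to uniform convergence on that compact set. This buys you a statement that holds along the whole trajectory on $[0,T]$ and for almost every initial condition (hence in probability), which is strictly stronger than what the paper proves; the price is a set of regularity hypotheses (coercivity of $U$, Lipschitz $\nabla_q U$, equi-Lipschitz approximants) that are reasonable for the models considered but are not part of the stated proposition, whereas the paper's price is the implicit strengthening of ``pointwise convergence'' to a uniform error bound and the restriction to a local-in-time claim. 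Your observation that pointwise convergence alone cannot control the trajectory is exactly the gap the paper papers over with its uniform-ball assumption, so your confinement argument is the honest version of what both proofs need.
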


\begin{proof}
Following \cite{cybenko1989approximation}, assume we are able to construct a sequence of approximate gradients $\widehat{\nabla^n_q H}$ satisfying
\begin{align}
\nabla_q H = \widehat{\nabla^n_q H} + E_n(q), \quad E_n(q) \in B_{1/n}(0) \, ,
\end{align}
where $B_{1/n}(0)$ is the ball around the origin of radius $1/n$. In this case, the vector field given by the approximate gradient induces a new system of equations:  
\begin{align}
    \frac{d q_i}{dt} &= \frac{\partial H}{\partial p_i} \\ \nonumber
    \frac{d p_i}{dt} &= - \frac{\widehat{\partial H}}{\partial q_i}  \, .
\end{align}
 Then it follows that
\begin{align}
\frac{d H}{dt} &= \sum_{i=1}^d \Big[ \frac{d q_i}{dt} \frac{\partial H}{\partial q_i} + \frac{d p_i}{dt} \frac{\partial H}{\partial p_i} \Big] \\ \nonumber
&= \sum_{i=1}^d \Big[  \frac{\partial H}{\partial p_i} \big(\frac{\widehat{\partial H}}{\partial q_i} + E_{n,i}(q)\big) - \frac{\widehat{\partial H}}{\partial q_i} \frac{\partial H}{\partial p_i} \Big] \\ \nonumber
&=  \sum_{i=1}^d \frac{\partial H}{\partial p_i}E_{n,i} \\ \nonumber
&= p^TE_n \sim N(0, E^T_n E_n) \, .
\end{align}
This last line implies $p^TE_n$ is $O_p(\sqrt{E_n^TE_n})$, and hence that $\frac{d H}{dt}$ is $o_p(1)$.
\end{proof}

We note that Proposition 1 is a local result, and that local deviations from the true Hamiltonian flow will accrue to larger global deviations in general.  While this may seem disconcerting, NNgHMC maintains remarkably high acceptance rates in practice. To help understand why this is the case, we present local and global error analyses for the dynamics of the ordinary differential equation initial value problem 
\begin{align}
    \frac{d}{dt}z = f(z) \, , \quad z(t_0) = z^0 \in \mathbb{R}^k \, ,
\end{align}
approximated with function $\widehat{f}\approx f$.  These results will then be related back to NNgHMC by specifying $z=(q,p)$ and  
\begin{align}
    z=(q,p)^T , \quad f(q,p) = \Big(p, - \frac{\partial H}{\partial q}\Big)^T ,\quad \mbox{and} \quad \widehat{f}(q,p) = \Big(p, - \frac{\widehat{\partial H}}{\partial q}\Big)^T  .
\end{align}
The general form of the following proofs follows after Section 2.1.2 of \cite{leimkuhler2004simulating}.

\begin{prop}
(Local error bounds) Let $z^0=z(0)$ be the initial value, let $z(\Delta t)$ be the value of the exact, true trajectory after traveling for time $\Delta t$, and let $z^1$ be the value of the computed trajectory using Euler's method applied to the approximated gradient field.  Finally, assume that the exact solution is twice continuously differentiable. Then the local error $\epsilon^1=z(\Delta t) - z^1$ has the following bounds:
\begin{align}
\|\epsilon^1\| \leq \Delta t \, \delta + O(\Delta t^2) \, ,
\end{align}
where $\delta = \| f(z_0) - \widehat{f}(z_0)\|$ measures the difference between the true, exact trajectory and the approximated trajectory at point $z_0$. 
\end{prop}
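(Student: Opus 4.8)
The plan is to compare the exact trajectory directly against a single Euler step via Taylor's theorem, following the structure of Section 2.1.2 of \cite{leimkuhler2004simulating}. Write the Euler update on the approximated field as $z^1 = z^0 + \Delta t\, \widehat{f}(z^0)$. Since the exact solution $z(t)$ is assumed twice continuously differentiable, Taylor's theorem with remainder gives $z(\Delta t) = z^0 + \Delta t\, z'(0) + \tfrac{\Delta t^2}{2} z''(\tau)$ for some $\tau \in (0,\Delta t)$, and the ODE itself supplies $z'(0) = f(z^0)$. Subtracting the two expressions makes the local error split cleanly into a gradient-approximation term and a discretization term:
\begin{align}
\epsilon^1 = z(\Delta t) - z^1 = \Delta t\,\big(f(z^0) - \widehat{f}(z^0)\big) + \frac{\Delta t^2}{2}\, z''(\tau) \, .
\end{align}
The triangle inequality then yields $\|\epsilon^1\| \le \Delta t\, \delta + \tfrac{\Delta t^2}{2}\|z''(\tau)\|$, and bounding $\|z''\|$ by its supremum $M$ over the (compact) interval swept by the trajectory collapses the final term into $O(\Delta t^2)$, giving the claimed estimate with implicit constant $M/2$. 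Note that no regularity on $\widehat{f}$ is needed here, since Euler evaluates it only at the single point $z^0$.

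The step I would be most careful about is justifying the $O(\Delta t^2)$ remainder uniformly, i.e., asserting that $z''$ is bounded on the relevant interval. This reduces to the existence and $C^2$-regularity of the exact flow: by Picard--Lindel\"of we have a local solution, and $z'' = (\nabla f)\, f$ is continuous provided $f \in C^1$ near the trajectory, hence bounded on a compact time interval containing all step sizes under consideration. I would state $f \in C^1$ (equivalently, for the HMC specialization $z=(q,p)$, $f(q,p) = (p,-\partial H/\partial q)$, smoothness of the true potential $U$) as a standing assumption rather than belabor it, since it holds for the models of interest.

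Everything beyond this is the routine Taylor estimate above; in particular, no fixed-point iteration or Gr\"onwall-type accumulation argument is required at the local level --- that machinery is reserved for propagating this bound into a global error bound over the full leapfrog trajectory, which I would treat separately. I would close by remarking that the two terms in the bound have the expected interpretation: $\Delta t\,\delta$ is the first-order contamination from replacing $f$ by $\widehat f$, and $O(\Delta t^2)$ is the usual one-step Euler discretization error, so that a good gradient approximation ($\delta$ small) makes the local behavior of NNgHMC indistinguishable from ordinary Euler integration to leading order.
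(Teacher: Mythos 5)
Your proof is correct and follows essentially the same route as the paper's: Taylor-expand the exact solution to second order, subtract the Euler step on $\widehat{f}$, and apply the triangle inequality, with the $O(\Delta t^2)$ term controlled by continuity of $\ddot{z}$ on the compact interval. Your extra care about the uniformity of the remainder is a welcome refinement but does not change the argument.
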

\begin{proof}
The proof follows from the Taylor expansion of both $z(\Delta t)$ and $z^1$:
\begin{align}
\epsilon^1 &= \big( z_0 + \Delta t \,  \dot{z}(0) + \frac{1}{2}\Delta t^2 \ddot{z}(\tau)   \big) - \big( z_0 + \Delta t \widehat{f}(z_0)  \big) \\ \nonumber
           &= \Delta t \big( f(z_0) - \widehat{f}(z_0) \big) + \frac{1}{2}\Delta t^2 \ddot{z}(\tau)\, ,
\end{align}
where $\tau \in [0,\Delta t]$.  The result follows from the triangular inequality.
\end{proof}

From the above result, it follows that the local error rate approaches the $O(\Delta t^2)$ error rate of Euler's method using the true gradient field as $\delta = \| f(z_0) - \widehat{f}(z_0)\| = \| \frac{\partial H}{\partial q}(z_0) - \frac{\widehat{\partial H}}{\partial q}(z_0)\|$ goes to 0.  The same approach can be used to obtain global error bounds.

\begin{prop}
(Global error bounds) We adopt the same notation as above with the addition of the error at iteration $n$, $\epsilon^n=z(n\Delta t)- z^n$, where $z^n$ is the value after $n$ Euler updates using the approximate gradient field. Also, let $t_n=n\Delta t$.  Again we assume that the exact solution is twice differentiable, and we further assume that it is Lipschitz with constant $L$. Then the following bounds on $\epsilon^n$ hold:
\begin{align}
\|\epsilon^n\| \leq \big(e^{n\Delta tL}-1   \big)\big(\frac{\delta}{L} + O(\Delta t)\big) \, , \quad \mbox{for} \quad \delta =
\max \|f(z(j\Delta t))-\widehat{f}(z^j)\|\, ,
\end{align}
and $j=0,\dots,n$.
\end{prop}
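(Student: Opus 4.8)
The plan is to mimic the classical global error analysis for Euler's method (as in Section 2.1.2 of \cite{leimkuhler2004simulating}): I would upgrade the one-step estimate used in Proposition 2 (local error bounds) into a recursion for the accumulated error $\epsilon^n$, and then solve that recursion with a discrete Gronwall-type argument.

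First I would write, over a single step started from the exact trajectory point $z(t_n)$, the Taylor expansion
\begin{align}
z(t_{n+1}) = z(t_n) + \Delta t\, f(z(t_n)) + \tfrac12 \Delta t^2 \ddot z(\tau_n), \quad \tau_n \in [t_n, t_{n+1}],
\end{align}
alongside the approximate Euler update $z^{n+1} = z^n + \Delta t\, \widehat{f}(z^n)$. Subtracting gives the error recursion
\begin{align}
\epsilon^{n+1} = \epsilon^n + \Delta t\big(f(z(t_n)) - \widehat{f}(z^n)\big) + \tfrac12 \Delta t^2 \ddot z(\tau_n).
\end{align}
The next step is to decompose the middle term as $f(z(t_n)) - \widehat{f}(z^n) = \big(f(z(t_n)) - \widehat{f}(z(t_n))\big) + \big(\widehat{f}(z(t_n)) - \widehat{f}(z^n)\big)$: the first piece is controlled by the gradient-approximation error $\delta$, while the second is controlled by the Lipschitz constant $L$, contributing $L\|\epsilon^n\|$. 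Writing $M = \max_{0\le t\le t_n}\|\ddot z(t)\|$, which is finite by twice-continuous-differentiability on the compact interval, the triangle inequality yields the scalar recursion
\begin{align}
\|\epsilon^{n+1}\| \le (1 + \Delta t L)\|\epsilon^n\| + \Delta t\, \delta + \tfrac12 \Delta t^2 M.
\end{align}

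Finally I would unroll this recursion from $\epsilon^0 = 0$. Setting $a = 1 + \Delta t L$ and $b = \Delta t\, \delta + \tfrac12 \Delta t^2 M$, induction gives $\|\epsilon^n\| \le b\sum_{k=0}^{n-1} a^k = b\,\tfrac{a^n - 1}{a - 1} = \tfrac{(1+\Delta t L)^n - 1}{\Delta t L}\big(\Delta t\,\delta + \tfrac12 \Delta t^2 M\big)$, and the elementary bound $1 + x \le e^x$ gives $(1+\Delta t L)^n \le e^{n\Delta t L}$, so that
\begin{align}
\|\epsilon^n\| \le \big(e^{n\Delta t L} - 1\big)\Big(\frac{\delta}{L} + \frac{\Delta t\, M}{2L}\Big) = \big(e^{n\Delta t L} - 1\big)\Big(\frac{\delta}{L} + O(\Delta t)\Big),
\end{align}
which is the claimed bound.

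I expect the main difficulty to be bookkeeping rather than anything deep: one must check that the constant hidden in $O(\Delta t)$, namely $M/2L$, is uniform in $n$ over the fixed interval $[0,t_n]$, so that all the $n$-dependence is exposed in the prefactor $e^{n\Delta t L}-1$; and one must reconcile the statement's definition of $\delta$ (comparing $f$ along the true trajectory with $\widehat{f}$ along the computed trajectory) with the cleaner split above, which is precisely what absorbing the Lipschitz term into $\|\epsilon^n\|$ accomplishes. A secondary point is that the argument implicitly uses Lipschitz continuity of $\widehat{f}$ (equivalently $f$) on the region the trajectory visits; since $\widehat{f}$ here is a smooth finite neural network (paired with the identity map in the $q$-block), this holds automatically on any bounded set.
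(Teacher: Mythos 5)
Your proof is correct and follows essentially the same route as the paper's: Taylor-expand the exact solution over one step, subtract the approximate Euler update, split the middle term with the triangle inequality into a Lipschitz piece plus an approximation-error piece, and solve the resulting recursion $\|\epsilon^{n+1}\| \le (1+\Delta t L)\|\epsilon^n\| + \Delta t\,\delta + \tfrac12\Delta t^2 M$ using $1+x\le e^x$. The only (cosmetic) difference is that you insert $\widehat{f}(z(t_n))$ and invoke the Lipschitz property of $\widehat{f}$, whereas the paper inserts $f(z^n)$ and uses the Lipschitz property of $f$; both yield the identical final bound, and your closing remarks correctly flag the paper's slight looseness about which map carries the constant $L$ and at which points $\delta$ is evaluated.
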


\begin{proof}
The proof proceeds by recursion.  Assume that we have obtained $\epsilon^n= z(t_n)-z^n$.  Letting $\tau \in [t_n, t_{n+1}]$, a Taylor's expansion gives:
\begin{align}
\epsilon^{n+1} &= \Big( z(t_n) + \Delta t \dot{z}(t_n) + \frac{1}{2}\Delta t^2 \ddot{z}(\tau)  \Big) - \Big(z^n + \Delta t \widehat{f}(z^n)  \Big) \\ \nonumber
               &= \Big( z(t_n) + \Delta t\, f(z(t_n)) + \frac{1}{2}\Delta t^2 \ddot{z}(\tau)  \Big) - \Big(z^n + \Delta t \widehat{f}(z^n) \Big) \\ \nonumber
               &= \big(z(t_n) -z^n\big) + \Delta t \big(f(z(t_n)) - \widehat{f}(z^n) \big) + \frac{1}{2}\Delta t^2 \ddot{z}(\tau) \, .
\end{align}
But $\ddot{z}$ is continuous by assumption, so we can bound $\ddot{z}$ on the closed interval $[t_n,t_{n+1}]$ by a constant $M$. Furthermore, the Lipschitz assumption combined with the triangle inequality give:
\begin{align}
\|\epsilon^{n+1}\| &\leq \|\epsilon^n\| + \Delta t \Big(\|f(z(t_n))-f(z^n)\| + \|f(z^n) - \widehat{f}(z^n)\|\Big) + \frac{\Delta t^2 M}{2} \\ \nonumber
                   &\leq (1 + \Delta t L) \|\epsilon^n\| + \Delta t\, \delta + \frac{\Delta t^2 M}{2} 
\end{align}
Next we make use of the following recursion relationship:
\begin{align}
    a_{n+1} \leq C\,a_n + D \Longrightarrow a_n \leq C^n \, a_0 + \frac{C^n - 1}{C - 1}D \,
\end{align}
 for $C=(1+\Delta t L)$ and $D=\Delta t\,\delta + \Delta t^2 M/2$. Noting that $a_0=\epsilon^0=0$ gives
 \begin{align}
     \|\epsilon^n\| \leq (e^{t_nL}-1)\big(\frac{\delta}{L} + \frac{\Delta t M}{2L} \big) \, ,
 \end{align}
 and the result follows.
\end{proof}
The above result suggests that the usual numerical error caused by a large Lipschitz constant $L$ can overpower the effects of gradient approximation error $\delta$.

The preservation of volume entailed by both the theoretical Hamiltonian flow and the leapfrog integrator is important for HMC. The latter fact implies there is no need for Jacobian corrections within the accept-reject step.  It turns out that the NNgHMC dynamics also preserve volume, both for direct and for leapfrog simulation. 
\begin{lem}
Both for infinitesimal and finite step sizes, the NNgHMC trajectory preserves volume.
\end{lem}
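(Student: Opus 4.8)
The plan is to treat the infinitesimal and finite-step regimes separately, in each case leaning on the single structural fact that the learned field $\widehat{\nabla U}$ is a function of the position $q$ alone and never of the momentum $p$.

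For the infinitesimal (continuous-time) case I would write the ODE induced by the approximate gradient as the vector field $F(q,p)=\bigl(p,\,-\widehat{\nabla U}(q)\bigr)$ on the $2d$-dimensional phase space, i.e.\ exactly the system in Proposition 1. A flow preserves Lebesgue measure precisely when its generating vector field is divergence-free (Liouville's theorem, equivalently the transport equation for the density carried along the flow). Computing, $\nabla\cdot F=\sum_{i=1}^d\partial p_i/\partial q_i+\sum_{i=1}^d\partial(-\widehat{\partial U/\partial q_i})/\partial p_i$: the first sum vanishes because $p_i$ is an independent coordinate not depending on $q$, and the second vanishes because $\widehat{\nabla U}$ carries no $p$-dependence by construction. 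Hence $\nabla\cdot F\equiv 0$ and the time-$t$ flow map has unit Jacobian determinant for every $t$.

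For the finite step size case I would exhibit one leapfrog step of Algorithm 1 as a composition of elementary shear maps, each with unit Jacobian. The step factors into (i) a half momentum kick $(q,p)\mapsto\bigl(q,\,p-\tfrac{\epsilon}{2}\widehat{\nabla U}(q)\bigr)$, whose Jacobian is block lower-triangular with identity diagonal blocks, hence has determinant $\det I\cdot\det I=1$; (ii) a drift $(q,p)\mapsto(q+\epsilon p,\,p)$, whose Jacobian is block upper-triangular with identity diagonal blocks, hence also determinant $1$; and (iii) a second half kick of the same form as (i). Because the determinant of a composition is the product of the determinants, a full leapfrog step, and therefore the entire $L$-step trajectory together with the outer half-kicks, has Jacobian determinant $1$ and preserves volume. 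This is exactly what lets Algorithm 1 use the bare Metropolis ratio $r=\exp\bigl(H(q_L,p_L)-H(q_0,p_0)\bigr)$ with no Jacobian correction.

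The point to flag, and the conceptual crux rather than a computational obstacle, is that this volume preservation does \emph{not} come from symplecticity or from $\widehat{\nabla U}$ being the gradient of anything; the approximate field need not be conservative, so the NNgHMC continuous dynamics are in general not Hamiltonian. What survives is the strictly weaker divergence-free (and, discretely, unit-Jacobian-shear) property, and it survives solely because the network output depends on $q$ only. I would therefore state that dependence as an explicit hypothesis at the start of the proof, since it is the one assumption doing all the work; the remainder is the triangular-Jacobian bookkeeping sketched above.
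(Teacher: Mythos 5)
Your proof is correct and follows essentially the same route as the paper: Liouville's theorem (zero divergence of the approximate vector field, which holds because the network output depends on $q$ only) for the continuous case, and the unit-Jacobian shear decomposition of the leapfrog step for the finite case, which the paper simply cites to Neal. The only cosmetic difference is that you compute the divergence of $\bigl(p, -\widehat{\nabla U}(q)\bigr)$ directly, whereas the paper routes the same computation through the error term $E_n = \nabla_q H - \widehat{\nabla_q H}$ and observes $\partial E_i/\partial p_i = 0$.
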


\begin{proof}
For the finite case, the leapfrog algorithm iterates between shear transformations and so preserves volume \cite{neal2011mcmc}. For the case of direct simulation, we use the fact that the Hamiltonian vector field induced by the approximate gradient field has zero divergence (Liouville's Theorem, \cite{neal2011mcmc}). We use the notation of Proposition 1, but drop the subscript $n$ for the sake of readability:
\begin{align}
\sum_{i=1}^d \Big[\frac{\partial}{\partial q_i}\frac{d q_i}{dt} + \frac{\partial }{\partial p_i}\frac{d p_i}{dt} \Big] &= \sum_{i=1}^d \Big[\frac{\partial}{\partial q_i}\frac{\partial H}{\partial p_i} - \frac{\partial }{\partial p_i}\frac{\widehat{\partial H}}{\partial q_i} \Big] \\ \nonumber
&= \sum_{i=1}^d \Big[\frac{\partial}{\partial q_i}\frac{\partial H}{\partial p_i} - \frac{\partial }{\partial p_i}\big(\frac{\partial H}{\partial q_i}-E_i\big) \Big]  \\ \nonumber
&= \sum_{i=1}^d \frac{\partial }{\partial p_i}E_i = 0 \, .
\end{align}
\end{proof}

Not only does the NNgHMC trajectory preserve volume, it is reversible as well. This easy fact is shown in the proof of Proposition 2.
\begin{thm}
The NNgHMC transition kernel leaves the canonical distribution $\exp \{ -H(q,p)\}$ invariant.
\end{thm}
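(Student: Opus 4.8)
The plan is to show that the NNgHMC transition kernel satisfies detailed balance with respect to the canonical density $\exp\{-H(q,p)\}$, which immediately implies invariance. The standard HMC argument (as in \cite{neal2011mcmc}) has three ingredients: (i) the momentum refreshment step $p_0 \sim \mathcal{N}(0,I)$ leaves the canonical distribution invariant because the Gaussian is exactly the conditional $\exp\{-K(p)\}$ marginally; (ii) the deterministic leapfrog map followed by a momentum flip is volume-preserving and an involution; and (iii) the Metropolis accept-reject step with ratio $r = \exp\{H(q_0,p_0) - H(q_L,p_L)\}$ corrects for the change in $H$ along the trajectory. The only place where NNgHMC differs from classical HMC is that the leapfrog integrator uses $\widehat{\nabla U}$ in place of $\nabla U$, so the entire burden of the proof is to check that ingredients (ii) and (iii) survive this substitution.

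First I would fix notation: write the approximate-gradient leapfrog map as $\Phi: (q_0,p_0) \mapsto (q_L,p_L)$, and let $F$ denote the momentum negation $(q,p)\mapsto(q,-p)$. The key structural facts I need are that $\Phi$ is volume-preserving — this is exactly Lemma 1 (finite step-size case), since the approximate-gradient leapfrog still alternates shear transformations in $q$ and in $p$ — and that $F \circ \Phi$ is an involution, i.e. $(F\circ\Phi)\circ(F\circ\Phi) = \mathrm{id}$. Reversibility holds because each leapfrog step has the same palindromic structure regardless of which vector field generates the momentum kicks; running the half-step/full-step/half-step scheme backwards with negated momentum retraces the trajectory. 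I would state this as the ``easy fact'' promised in the text: $\Phi^{-1} = F\circ\Phi\circ F$.

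Next I would assemble the detailed-balance computation. Consider the proposal that draws $p_0 \sim \mathcal{N}(0,I)$, applies $\Phi$, negates momentum, and accepts with probability $\min\{1, \exp(H(q_0,p_0) - H(q_L,p_L))\}$. Since $\Phi$ preserves volume, the Jacobian factor in the proposal density is $1$, so the proposal density from $(q,p)$ to $(q',p')$ is $\mathcal{N}(p;0,I)\,\mathbf{1}[(q',p') = F\Phi(q,p)]$. Plugging into the detailed balance equation
\begin{align}
\exp\{-H(q,p)\}\, \mathcal{N}(p)\, \min\{1, e^{H(q,p)-H(q',p')}\} = \exp\{-H(q',p')\}\, \mathcal{N}(p')\, \min\{1, e^{H(q',p')-H(q,p)}\}
\end{align}
and using $\exp\{-H(q,p)\} = \exp\{-U(q)\}\mathcal{N}(p)/(2\pi)^{d/2}$ so that the $\mathcal{N}$ factors combine correctly with the $K$ part of $H$, both sides reduce to $\exp\{-U(q)-U(q')\}\cdot(2\pi)^{-d}\min\{e^{-K(p)-K(p')+\ldots}\}$ — the standard cancellation — and equality holds by the elementary identity $e^{-a}\min\{1,e^{a-b}\} = e^{-b}\min\{1,e^{b-a}\}$. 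Finally I would note that the momentum flip $F$ is irrelevant to the $q$-marginal and that momentum is immediately resampled at the start of the next iteration, so the chain on $q$-space has the posterior $\pi(q|x) \propto \exp\{-U(q)\}$ as its stationary distribution.

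The main obstacle is conceptual rather than computational: one must be careful that reversibility of the integrator genuinely does not depend on the vector field being a true gradient. In classical HMC the reversibility proof sometimes tacitly uses that the map is symplectic, which $\Phi$ is not (it is only volume-preserving, as the text emphasizes). So the argument must be rebuilt using only (a) volume preservation, supplied by Lemma 1, and (b) the involution property $\Phi^{-1} = F\circ\Phi\circ F$, which follows purely from the palindromic half-step structure of leapfrog and requires no symplecticity. Once those two properties are isolated, the Metropolis-Hastings detailed-balance bookkeeping is routine and identical to the classical case.
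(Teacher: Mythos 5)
Your proposal is correct and follows essentially the same route as the paper: volume preservation of the approximate-gradient leapfrog map (the paper's Lemma 1), reversibility via the palindromic half-step/full-step/half-step structure with momentum negation, and the standard Metropolis--Hastings detailed-balance bookkeeping. You spell out the detailed-balance computation and the involution property $\Phi^{-1}=F\circ\Phi\circ F$ more explicitly than the paper does, and you correctly emphasize the key point the paper also relies on --- that none of these properties requires the momentum-kick vector field to be a true gradient --- but the underlying argument is the same.
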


\begin{proof}
Since leapfrog integration preserves volume and since the Metropolis acceptance probability is the same as for classical HMC, all we need to show is that the leapfrog integration is reversible. This fact follows in the exact same way as for HMC, despite the use of an approximate gradient field to generate the dynamics:
\begin{align}
p_i(t + \epsilon/2) &= p_i(t)  - (\epsilon/2)\, \frac{\widehat{\partial U}}{\partial q_i}(q(t)) \\ \nonumber
q_i(t + \epsilon) &= q_i(t) + \epsilon \, p_i(t + \epsilon/2) \\ \nonumber
p_i(t+\epsilon) &= p_i(t + \epsilon/2) - (\epsilon/2)\, \frac{\widehat{\partial U}}{\partial q_i}(q(t+ \epsilon)) \, .
\end{align}
These are the same equations as in \cite{neal2011mcmc} except with  $\frac{\widehat{\partial U}}{\partial q_i}$ replacing  $\frac{\partial U}{\partial q_i}$.  Hence, just as in \cite{neal2011mcmc}, the NNgHMC leapfrog equations are symmetric and thus reversible: to reverse a sequence of leapfrog dynamics, negate $p$, take the same number of steps, and negate $p$ again. It follows that the NNgHMC transition kernel leaves the canonical distribution invariant and is an asymptotically exact method for sampling from the posterior distribution.
\end{proof}

Regardless of the accuracy of neural network gradient approximation, following the leapfrog simulated Hamiltonian proposal scheme would recover the true posterior distribution when combined with Metropolis-Hastings correction. If the gradient approximation is ``bad,'' NNgHMC would break down to a random walk algorithm. If the gradient approximation is ``close enough," NNgHMC would behave just like standard HMC, operating on energy level contours at a fraction of the computation cost. \textit{The neural network gradient approximation can be controlled with two tuning parameters: hidden layer size $h$ and training time $t$, in addition to leapfrog steps $l$ and step size $s$.} The neural network architecture is fixed to have one hidden layer and the number of units has to be pre-determined. Neural network training time can be either set to some number of epochs or dependent on a stopping criterion (typically based on change in loss function between epochs). Since there is no noise (error) in the gradient, overfitting is not a concern; the hidden layer size and training time could be relatively large.

Given sufficient training data, the neural network will be able to accurately approximate the gradient field. The important question is: how much training data should be collected? To address this, we propose a training schedule that consists of a start point, an end point, and a rate for gradient data collection. For example, one may wish to run a HMC chain to draw 5000 samples in total. A training schedule could be training the neural network every 200 draws between the 400th and 1000th draws. After the neural network is trained each time, one would use the approximated gradient to sample for some iterations. If the acceptance probability is similar to that of standard HMC, one would stop the training schedule and complete the entire chain with NNgHMC. Otherwise, standard HMC would be used to sample the remaining draws. \textit{Since training the neural network and using it to sample is much cheaper computationally compared to standard HMC, the training schedule would add little overhead even if the neural network gradient approximation fails.} 

\begin{figure}[h!]
\centering
\includegraphics[width=0.6\linewidth]{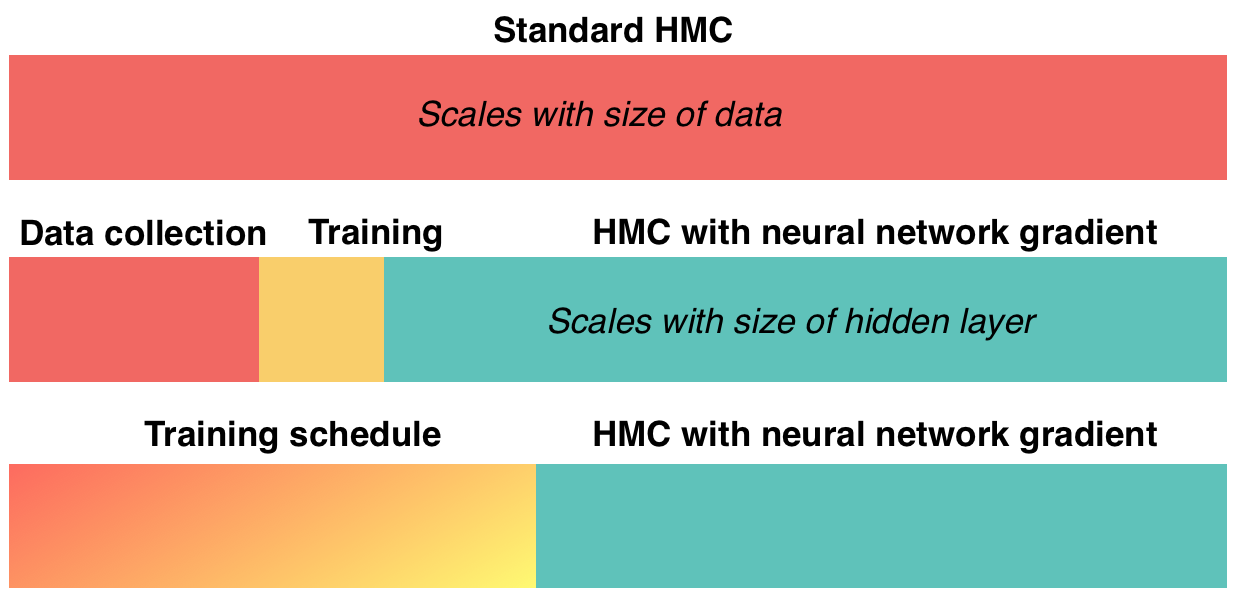}
\caption{After the neural network learns an accurate gradient approximation, the computation cost of sampling is substantially reduced compared to standard HMC. Therefore, the benefit of neural network gradient HMC depends on how much training data is enough for the neural network. Using a training schedule, we would stop standard HMC immediately after the neural network has learned from enough data.}
\label{fig:schedule}
\end{figure}

\section{Experiments}

In this section, we demonstrate the merits of proposed method: accuracy and scalability through a variety of experiments. The accuracy of gradient approximation can be reflected by high acceptance probability that is similar to standard HMC using the true gradient. Compared to draws from stochastic gradient HMC, the draws using proposed method are much more similar to standard HMC draws. Scalability means better performance when both data size $n$ and dimensionality $p$ increase. The performance metric is effective sample size (ESS) adjusted by CPU time. ESS estimates the number of ``independent'' samples by factoring $\rho(k)$ correlation between samples at lag $k$ into account:
\begin{align*}
    ESS=\frac{n}{1+2\sum_{k=1}^{\infty}\rho(k)}.
\end{align*}
The previous surrogate approach fails when $p$ reaches $40$ while the proposed method works well up to $p=200$. \textit{Lastly, speed evaluation is done on three real data sets and the proposed method consistently beats standard HMC even when the time to collect training data and train the neural network is included.} Our proposed NNgHMC method is implemented in Keras and uses the default Adam optimizer \cite{kingma2014adam} during training. All experiments are performed on a 3.4 GHz Intel Quad-Core CPU and our code is available at: github.com/linggeli7/hamiltonian.

\subsection{Distributions with challenging gradient fields} 

\textbf{The banana shaped distribution} in two dimensions can be sampled using the following un-normalized density

\begin{align}
    f(x_1, x_2)\propto\exp{-\frac{(Ax_1)^2}{200}-\frac{1}{2}(Cx_2+B(Ax_1)^2-100B)^2}
\end{align}
where $A,C$ control the scale in $x_1,x_2$-space and B determines the curvature. For HMC, the energy function is set to be $-\log{f(x_1,x_2)}$ and the its gradient can be easily calculated. Using leapfrog steps $l=5$ and step size $s=0.1$, standard HMC is used to sample 5000 draws with acceptance probability 0.58. Gradient values collected during the first 1000 draws are then used to train a neural network with hidden layer size $h=100$ for $t=50$ epochs. With the same tuning parameters, NNgHMC is used to sample 5000 draws with acceptance probability 0.57. Figure \ref{fig:banana} compares standard HMC and NNgHMC draws, the true and approximated gradient fields, and two long simulated leapfrog trajectories using both. The neural network learns the distorted gradient field accurately and NNgHMC completely recovers the banana shape.

\begin{figure}[h!]
\centering
\includegraphics[width=0.4\linewidth]{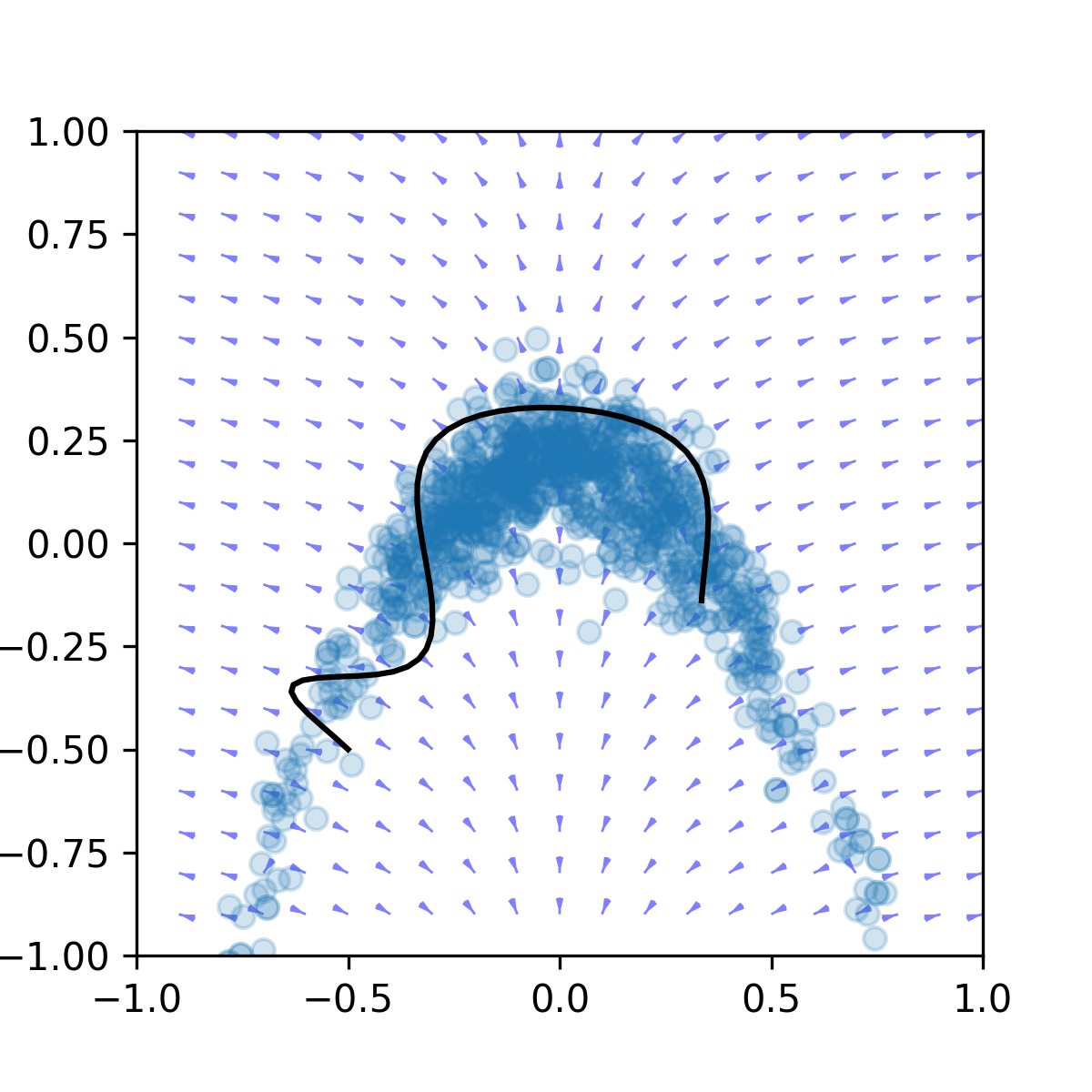}
\includegraphics[width=0.4\linewidth]{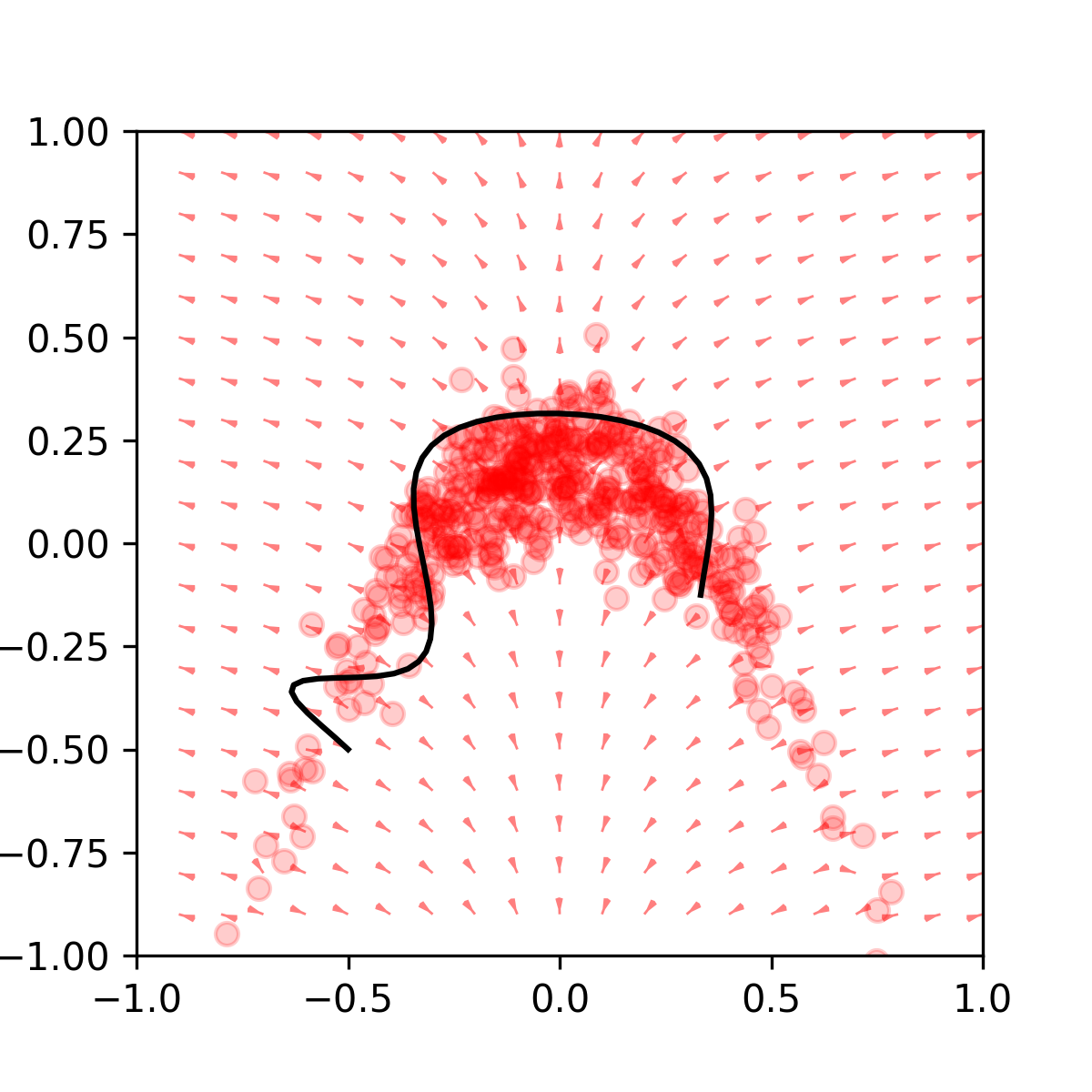}
\caption{Gradient fields, samples, and leapfrog trajectories using standard HMC (blue) and NNgHMC (red) are indistinguishable.}
\label{fig:banana}
\end{figure}

Next, we illustrate the proposed method on a \textbf{multivariate Gaussian distribution with ill-conditioned covariance.} The distribution is given by $q\sim\mathcal{N}_{30}(0,\Sigma)$ where $\Sigma$ is a diagonal matrix with smallest value $0.1$, largest value $1000$ and other values uniformly drawn between 1 and 100. As the distribution is on very disparate scales in different dimensions, HMC needs accurate gradient information to move accordingly. For HMC, the leapfrog step size $s$ is set to be 0.5 and the number of steps $l$ is set to be 100 so that acceptance probability is around 0.7. If the step size is too big, HMC would miss the high density region in the narrowest dimension. Without a sufficiently long trajectory, HMC would fail to explore the elongated tails in the widest dimension.

\begin{figure}[h!]
\centering
\includegraphics[width=0.8\linewidth, height=0.3\linewidth]{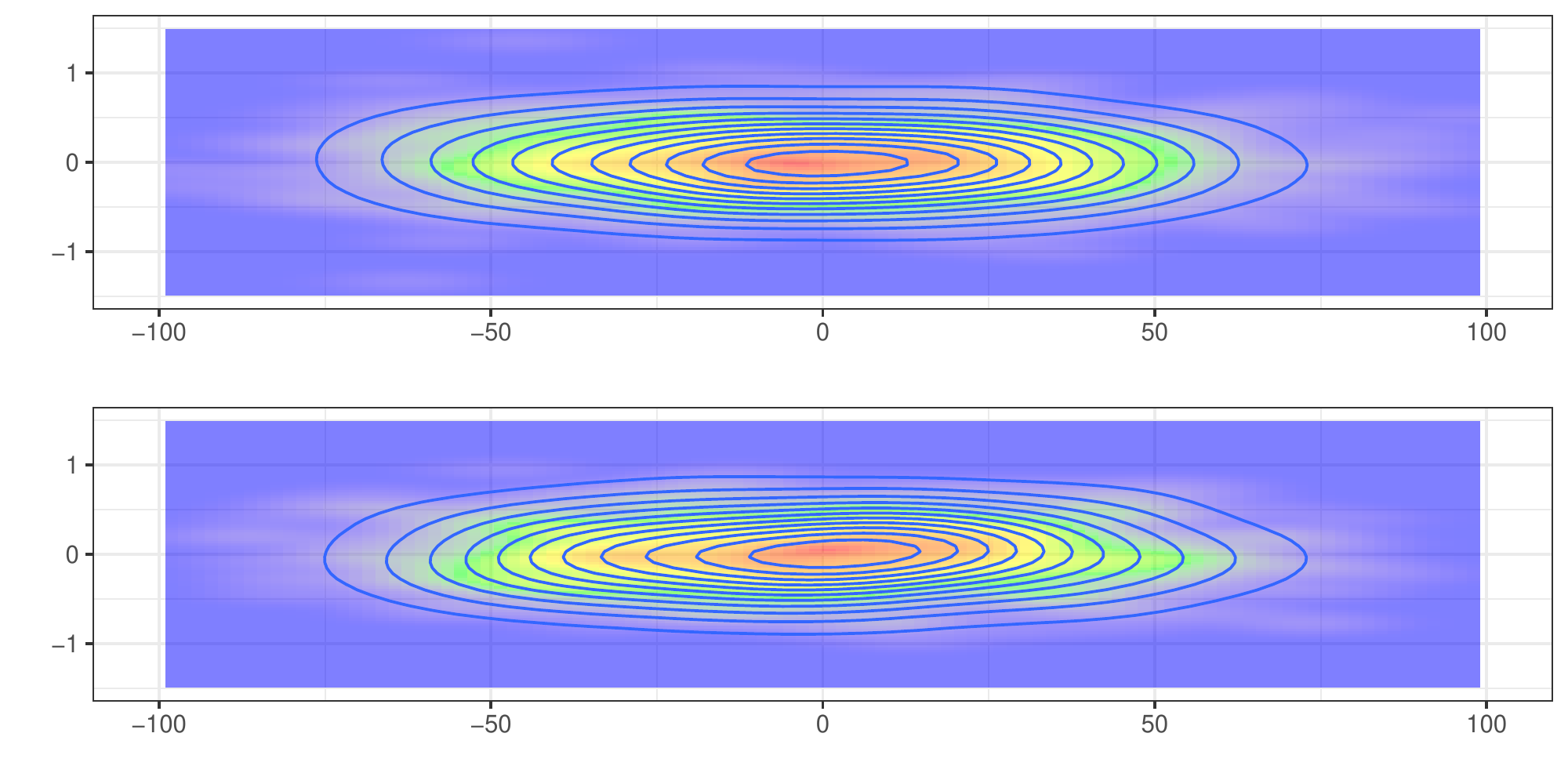}
\caption{NNgHMC posterior (bottom) captures the highly elongated shape of the Gaussian distribution in the two most extreme dimensions $(\sigma^2_1=0.1, \sigma^2_{30}=1000)$ as well as the HMC posterior (top). Note that the x- and y-axes are on very different scales.}
\label{fig:gaussian}
\end{figure}

We collect sample gradients until 50 iterations after convergence to train the neural network. The neural network has $h=100$ units in the hidden layer and is trained for $t=10$ epochs. With the same tuning parameters as standard HMC, NNgHMC has acceptance probability around 0.5. Despite slightly lower acceptance probability, as shown in Figure \ref{fig:gaussian}, NNgHMC converges to the true posterior just as standard HMC. With more training data, the neural network will learn the gradient field more accurately and NNgHMC will have similar acceptance probability as standard HMC.
 

\subsection{200-dimensional Bayesian logistic regression} 

Next we demonstrate the scalability of proposed method on logistic regression with simulated data. The $X$ matrix has $50,000$ rows drawn from a 200 dimensional multivariate Gaussian distribution with mean zero and covariance $I_{200}$. The regression coefficients $\beta$ are drawn independently from $Uniform(-1,1)$. Given $X$ and $\beta$, the response vector is created with $Y_i\sim Bernoulli(logistic(X_i\beta))$. With $l=20$ leapfrog steps and step size $s=0.01$, HMC makes 1000 draws in 300 seconds with acceptance probability around $0.8$. 4000 training points and gradients, which come from 200 draws after convergence, are used for neural network training.

With the same tuning parameters, NNgHMC can make 1000 draws in just 40 seconds with acceptance probability around $0.6$. HMC yields 1.5 effective samples per second while NNgHMC yields 6.75 effective draws per second on average. The improvement on effective sample size and CPU time ratio is considerable and will only increase as the size of the data set increases.

\begin{figure}[h!]
\centering
\includegraphics[width=0.8\linewidth]{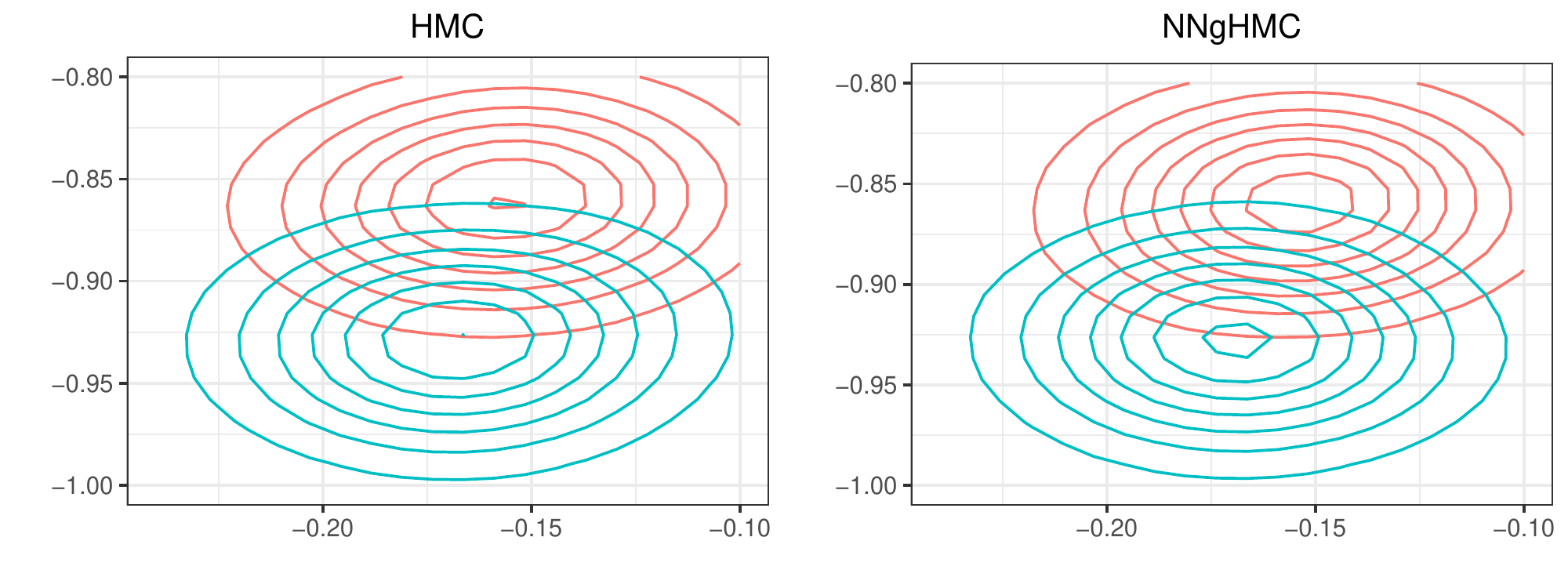}
\caption{We first use HMC to collect training samples from the posterior of the 200-dimensional logistic regression model under a diffused prior with variance 10 for NNgHMC. The HMC (left) and NNgHMC (right) posteriors are colored in green. Then we use the same trained network for NNgHMC under a concentrated prior with variance 0.1. The new HMC and NNgHMC posteriors are colored in red. \textit{Although most of the training data come from the green region, the neural network can extrapolate well to sample around the red region.}}
\label{fig:priors}
\end{figure}

The choice of prior plays an important role in Bayesian inference, and it is common to fit models with different priors for sensitivity analysis. The gradient of energy function $\nabla U$ is equal to the sum of the gradient of negative log-likelihood $-\nabla \log \pi(x|q)$ and the gradient of log prior $\nabla \log\pi(q)$. As the proposed method provides an accurate approximation of $\nabla U$ under prior $\pi(q)$, adding $\nabla \log \pi'(q)-\nabla \log \pi(q)$ will yield an approximation of $\nabla U$ under a new prior $\pi'(q)$. In this case, NNgHMC can sample from the new posterior much faster than HMC without additional training. Figure \ref{fig:priors} compares the NNgHMC and HMC posteriors.

\begin{remark}
While there are no fixed rules on the size of hidden layers, non-generative models typically have larger hidden layers than output layers. With input and output dimensions both being $200$, a large hidden layer of size $400$ would lead to $160,000$ total units, which is computationally expensive. Meanwhile, a network with a hidden layer of size $200$ has half as many total units but is not nearly as expressive. Here we use eight disjoint hidden layers of size 50 to approximate 25 dimensional blocks of the gradient to cut down the number of total units to 90,000. Figure \ref{fig:loss} compares the training losses of these three networks.
\end{remark}

\begin{figure}[h!]
\centering
\includegraphics[width=0.4\linewidth, height=0.3\linewidth]{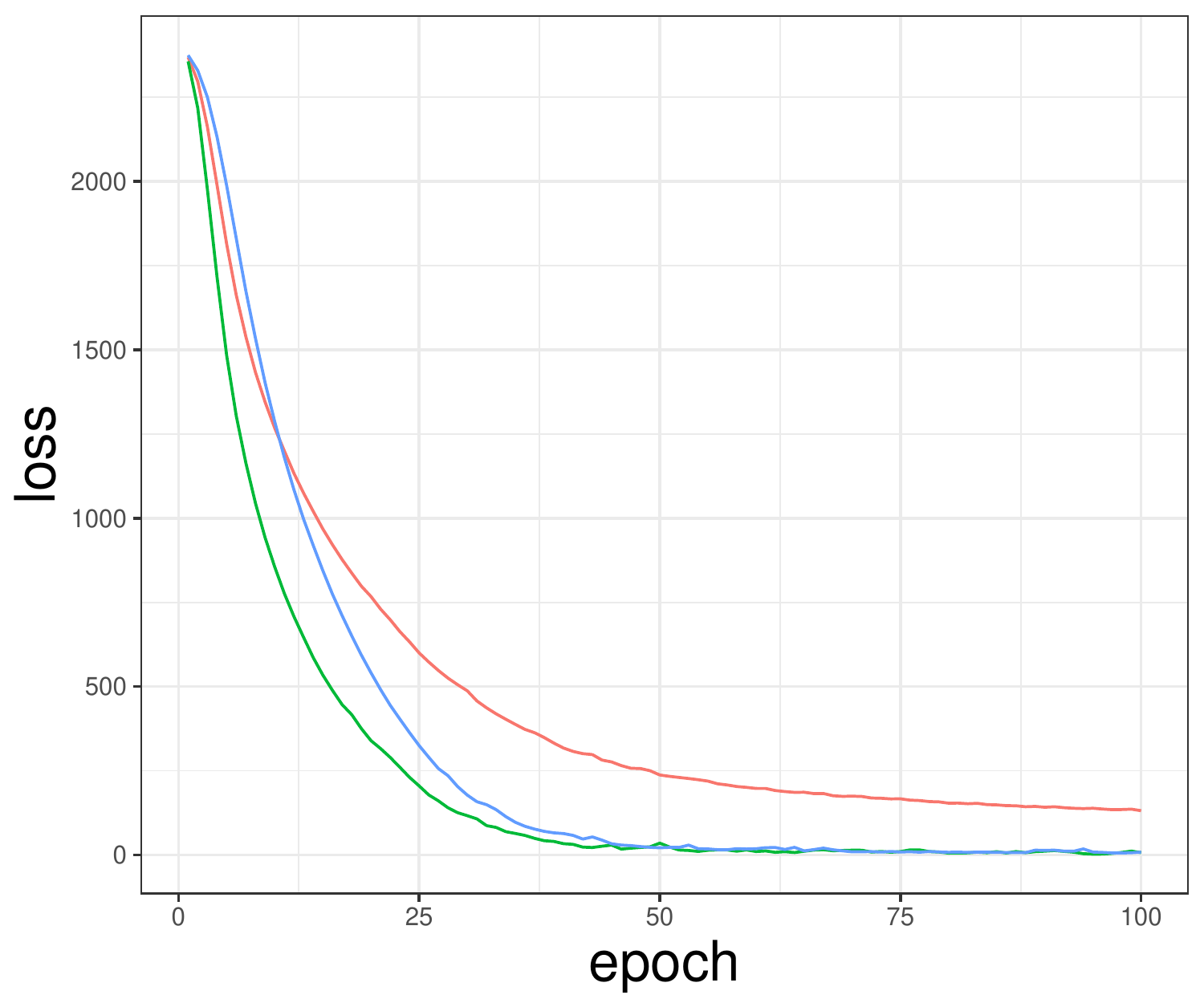}
\caption{The gradient of the 200-dimensional logistic regression model is approximated by neural networks of different designs. In terms of performance measured by training $L_2$ loss on the true gradient, the block network (blue) matches the single large network (green) and outperforms the single small network (red) using comparable number of total units.}
\label{fig:loss}
\end{figure}

\subsection{Low-dimensional models with expensive gradients}
In this section, we evaluate our method using two models that involve costly gradient evaluations in spite of their  typically low dimensions. First, we focus on the \textbf{generalized autoregressive conditional heteroskedasticity (GARCH)}, which is a common econometric model that models the variance at time $t$ as a function of previous observations and variances. The general GARCH$(m,r)$ model is given by
\begin{align}
    y_t&\sim N(0,\sigma_t^2)\\
    \sigma_t^2&=\alpha_0+\sum_{j=1}^{m}\alpha_jy_{t-j}^2+\sum_{j=1}^{r}\beta_j\sigma_{t-j}^2.
\end{align}
Conditioning on the first $\max(m,r)$ observations, the likelihood is the product of $N(0,\sigma_t^2)$ densities. The likelihood and gradient calculation for GARCH models can be slow as it has to be done iteratively and scales with the number of observations. As shown in figure \ref{fig:timeseries}, 1000 observations are generated with a $GARCH(2,1)$ model and used as data for comparing standard HMC and NNgHMC. Truncated uninformative Gaussian priors are used because of GARCH stationarity constraints. 10000 draws are taken with standard HMC and gradient values collected between 1000 to 2000 iterations are used for training. Training a neural network with hidden layer size 50 takes 5s. With tuning parameters fixed at step size $s=0.002$ and $l=15$ leapfrog steps, standard HMC and NN gradient HMC both have close to 0.7 acceptance probability, but the latter is more computationally efficient (Table \ref{tab:garch}). 

\begin{table}[h!] \centering 
  \caption{Comparing standard HMC and NNgHMC using a GARCH model.} 
  \label{tab:garch} 
\begin{tabular}{cccccc} 
\\[-1.8ex]\hline 
\hline \\[-1.8ex] 
Method & AP & ESS & CPU time & Median ESS/s & Speed-up \\ 
\hline \\[-1.8ex] 
Standard  & 0.72 & (99, 261, 424) & 436s & 0.60 & 1 \\
NNg & 0.70 & (116, 176, 303) & 59s & 2.98 & \textbf{4.98} \\
\hline \\ [-1.8ex] 
\multicolumn{6}{l}{AP: acceptance probability}\\
\multicolumn{6}{l}{ESS: effective sample size (min, median, max) after removing 10\% burn-in}\\
\hline \hline
\end{tabular}
\end{table} 

\begin{figure}[h!]
\centering
\includegraphics[height=50px]{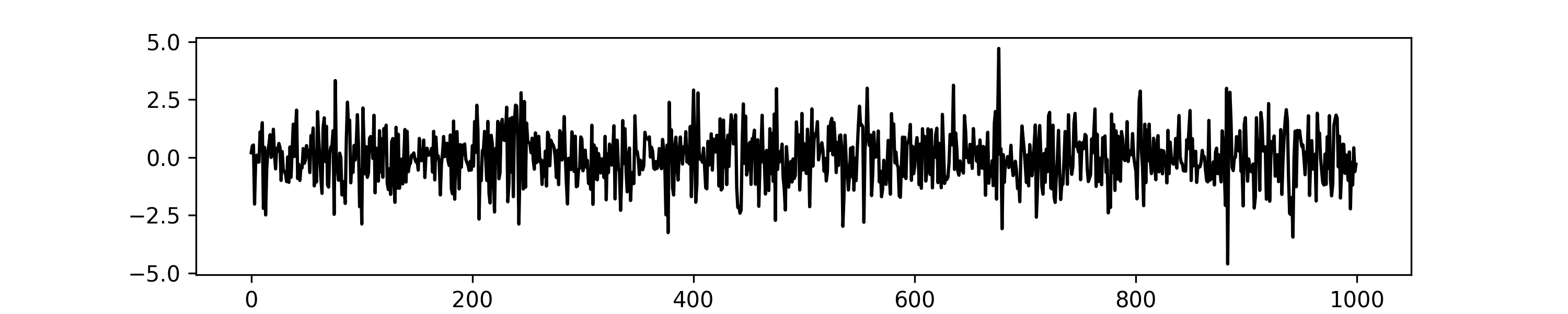}
\caption{Time series data generated with a GARCH(2, 1) model.}
\label{fig:timeseries}
\end{figure}


\textbf{Gaussian process} is computationally expensive because the covariance matrix is $n\times n$ and inverting it requires $\mathcal{O}(n^3)$ computation. Here we consider a Gaussian process regression model with the Mat\'ern kernel:

\begin{align}
    Y&\sim\mathcal{N}(0,\mathcal{K}(X,X))\\
    \mathcal{K}(d)&=\frac{2^{1-\nu}}{\Gamma(\nu)}(\sqrt{2\nu}\frac{d}{l})^\nu K_\nu(\sqrt{2\nu}\frac{d}{l})
\end{align}
where $d$ is the Euclidean distance between two observations $x$ and $x'$, length scale $l$ and smoothness $\nu$ are the hyper-parameters. $\Gamma$ denotes the gamma function and $K_\nu$ is the modified Bessel function of the second kind. Here $\nu$ is fixed at 1.5 to limit Gaussian process draws to be once differentiable functions. It is common to add white noise $\sigma^2 I$ to the covariance matrix for numerical stability. Therefore, the second free hyper-parameters besides $l$ is $\sigma^2$. Diffused Lognormal priors are used for the hyperparameters. The $500\times4$ data matrix $X$ is drawn from Multivariate Gaussian with mean zero and identity covariance. $Y$ is obtained by mapping $X$ with a polynomial pattern and adding noise.

10000 draws are sampled using standard HMC with leapfrog steps $l=20$ and step size $s=0.05$; the acceptance probability is 0.83 but it is very time consuming. Gradient collected during the first 1000 draws is then used to train a neural network with hidden layer size $h=100$ for $t=100$ epochs. Using the same tuning parameters, NNgHMC can sample 10000 draws in much shorter time with the same acceptance probability (Table \ref{tab:gp}). Figure \ref{fig:regression} compares the standard HMC and NNgHMC posteriors; Figure \ref{fig:band} compares Gaussian process model posterior draws along one particular direction.

\begin{table}[h!] \centering 
  \caption{Experiment results using Gaussian process regression model} 
  \label{tab:gp} 
\begin{tabular}{cccccc} 
\\[-1.8ex]\hline 
\hline \\[-1.8ex] 
Method & AP & ESS & CPU time & Median ESS/s & Speed-up \\ 
\hline \\[-1.8ex] 
Standard & 0.83 & (5135, 5754, 7635) & 1834s & 3.14 & 1 \\
NNg & 0.84 & (4606, 6172, 7741) & 50s & 123.4 & \textbf{39.3} \\
\hline \\ [-1.8ex] 
\multicolumn{6}{l}{AP: acceptance probability}\\
\multicolumn{6}{l}{ESS: effective sample size (min, median, max) after removing 10\% burn-in}\\
\hline \hline
\end{tabular}
\end{table} 

\begin{figure}[h!]
\centering
\includegraphics[width=0.4\linewidth]{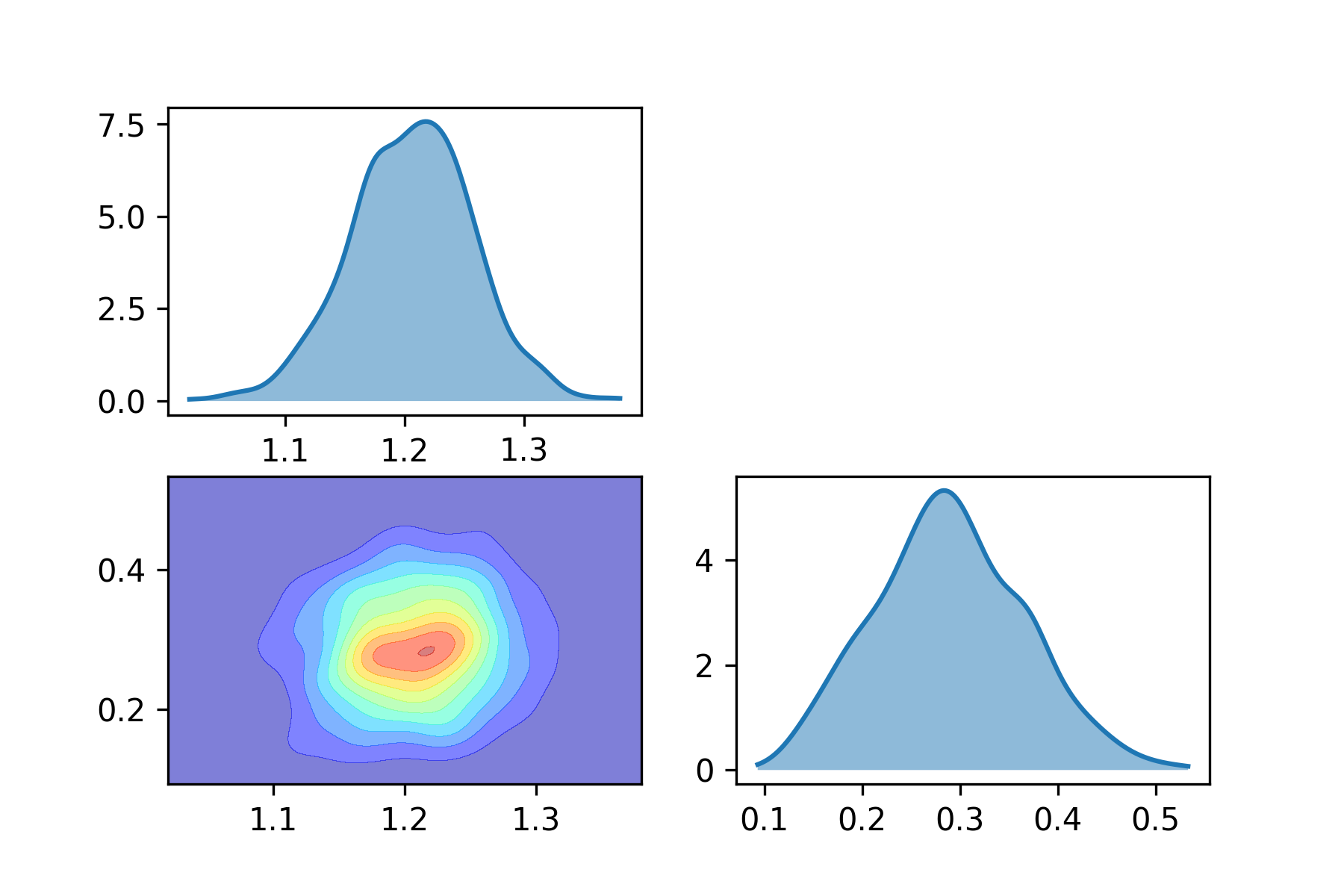}
\includegraphics[width=0.4\linewidth]{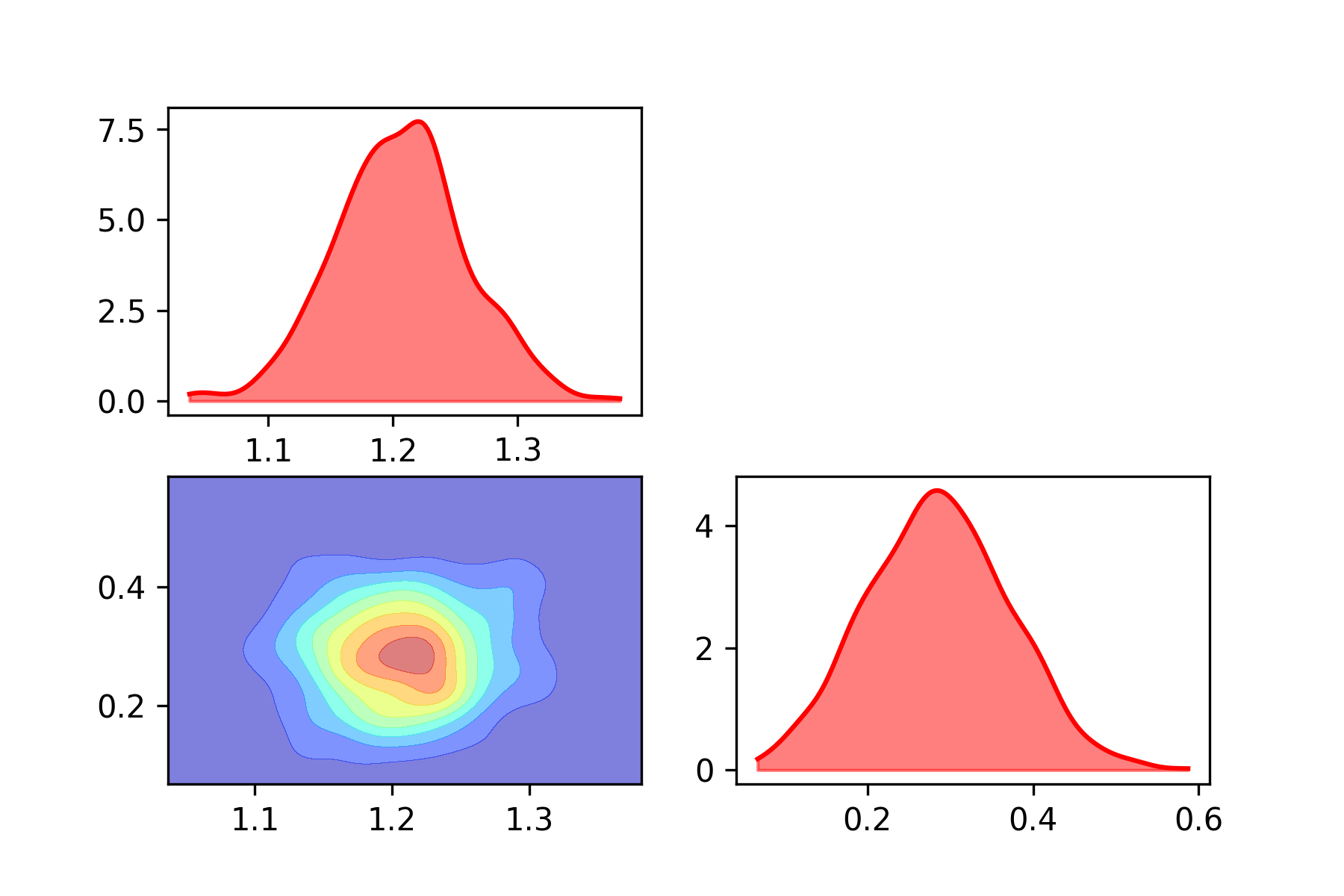}
\caption{GP regression model posteriors of hyper-parameters using standard HMC (blue) and NNgHMC draws (red).}
\label{fig:regression}
\end{figure}

\begin{figure}[h!]
\centering
\includegraphics[width=0.4\linewidth]{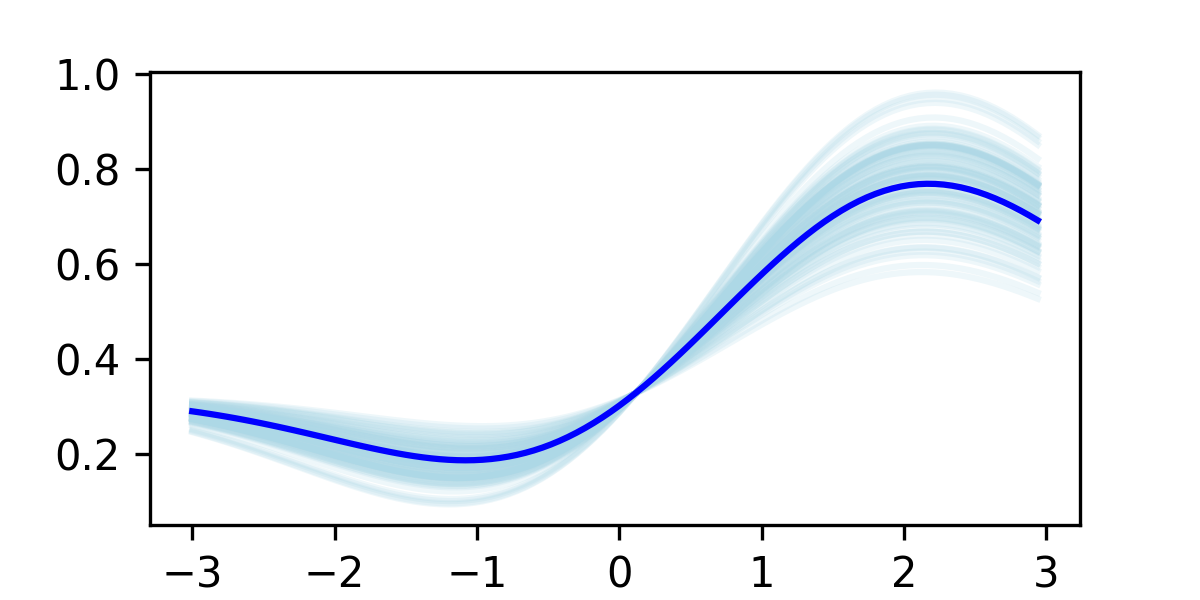}
\includegraphics[width=0.4\linewidth]{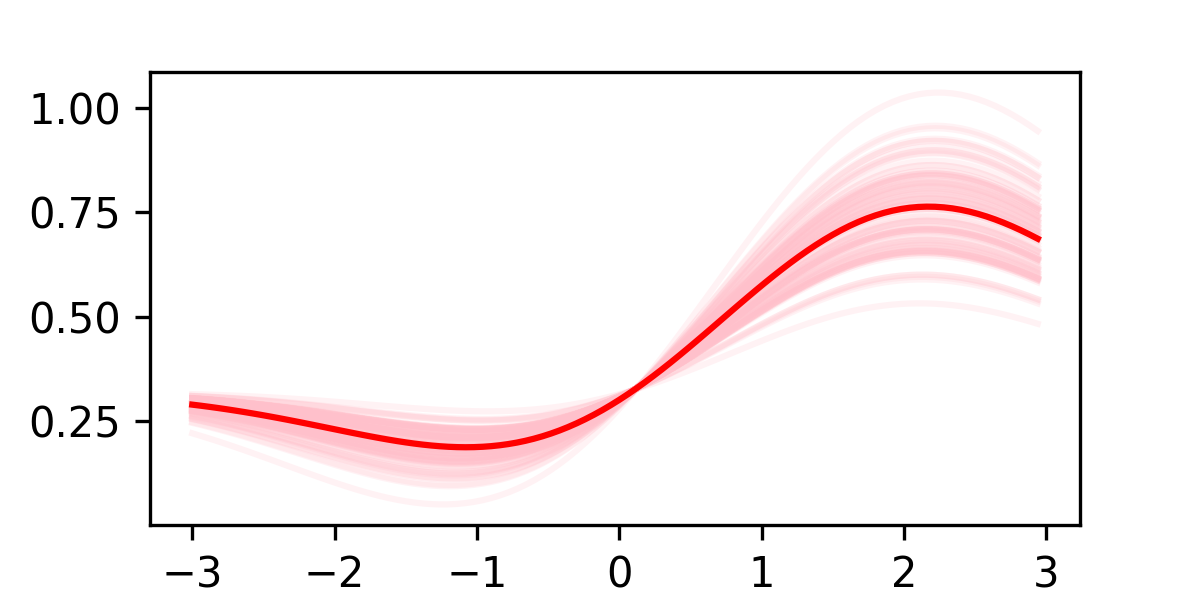}
\caption{GP regression model predictions with standard HMC (blue) and NNgHMC posteriors (red).}
\label{fig:band}
\end{figure}

\subsection{Comparison with stochastic gradient HMC}

Na\"ive stochastic gradient HMC using mini-batches of data is problematic as the noisy gradient can push the sampler away from the target region. Recent more advanced stochastic gradient method uses a friction term and is shown to sample from the true posterior asymptotically. The formulation of SGHMC is given by:

\begin{align}
d\theta &= M^{-1}r dt\\
dr&=-\nabla U(\theta)dt-BM^{-1}rdt+N(0,2Bdt)
\end{align}

where $N(0, 2Bdt)$ is the noise added to the gradient by subsampling. In practice, the friction term $BM^{-1}rdt$ is set arbitrarily.

To further improve speed, SGHMC does not perform Metropolis-Hastings correction and uses very small step sizes. The SGHMC posterior is dependent on the choice of step size; however, \textit{a priori} one would not know the optimal step size. Here we want to show that while SGHMC provides fast approximation of the true posterior when data are abundant, the SGHMC posterior may not be suitable for inference. 

In our experiment, the Cover Type data from UCI machine learning repository is used. We run standard HMC for 4000 iterations with $l=50$ leapfrog steps and step size $s=0.002$. We also run SGHMC for 4000 iterations with default parameters and varying step sizes from $s=5e-6$ to $s=5e-8$.

Figure \ref{fig:sghmc} illustrates the main issue with SGHMC. For these two marginal distributions, the SGHMC posteriors have roughly the same location but completely different shapes. On the other hand, NNgHMC posteriors agree with the standard HMC posteriors almost exactly.

\begin{figure}[h!]
\centering
\includegraphics[width=0.4\linewidth]{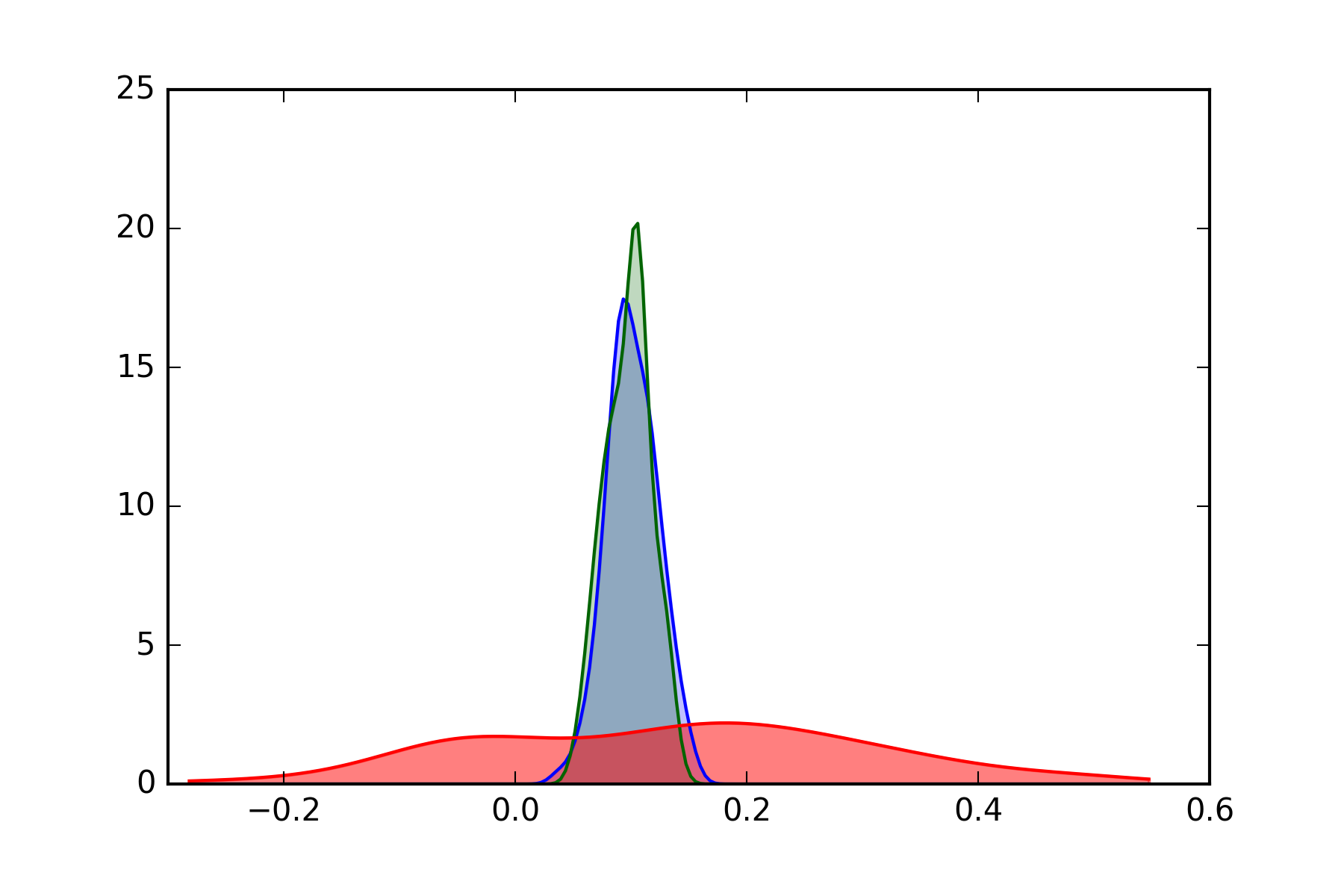}
\includegraphics[width=0.4\linewidth]{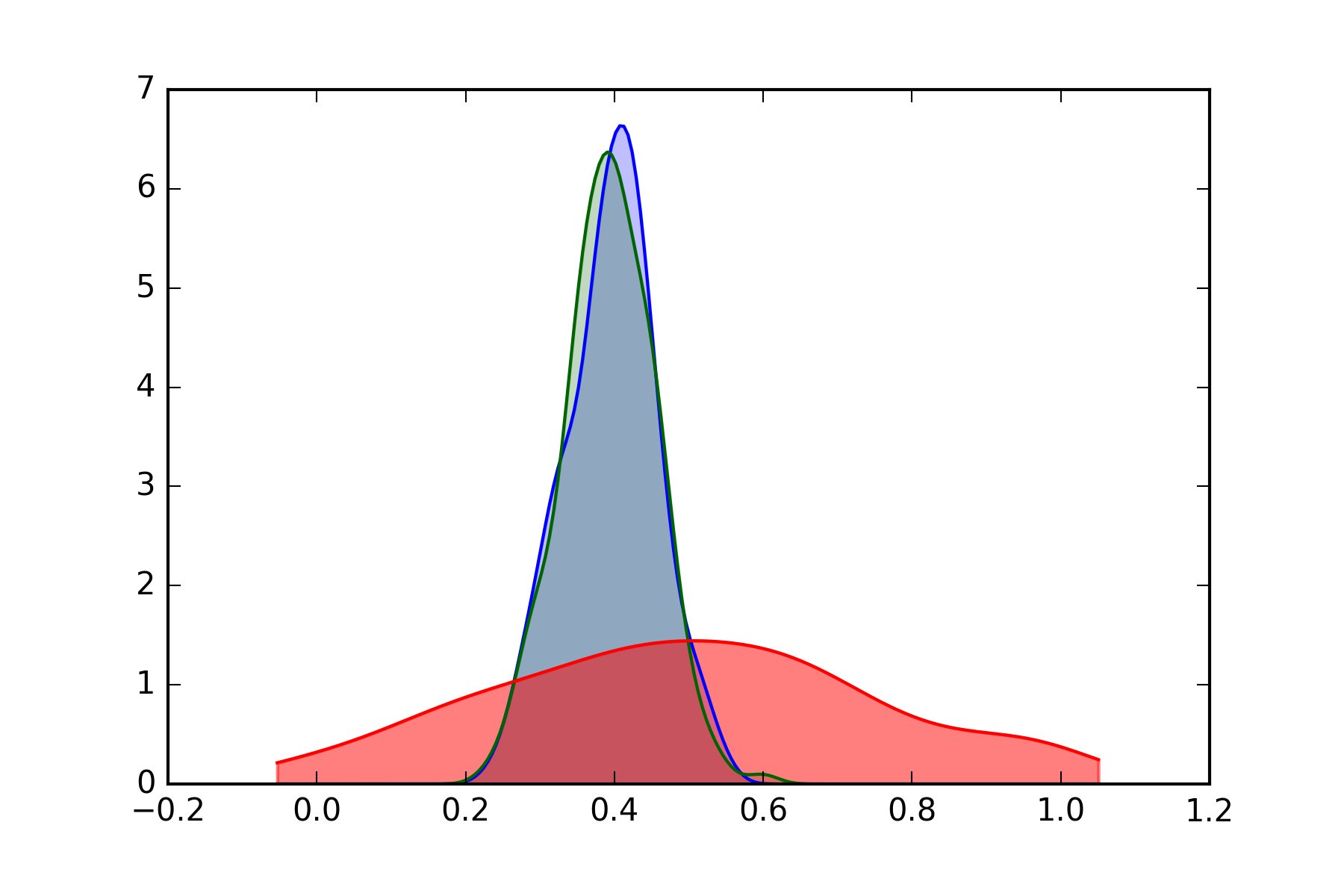}
\caption{Histograms of marginal posteriors of logistic regression model coefficients with Laplace prior on Cover Type data. Blue: standard HMC; Red: stochastic gradient HMC; Green: neural network gradient HMC}
\label{fig:sghmc}
\end{figure}

Another comparison with SGHMC is performed with Metropolis-Hastings correction. Here the sub-sampled data size is 5000 and the tuning parameters are $l=10$ leapfrog steps and step size $s=0.001$ so that the simulated trajectory is shorter for less gradient noise to compound. While SGHMC is faster still, the quality of samples is inferior compared to proposed method as indicated by lower ESS in Table \ref{tab:cover} and less mixed trace plot in Figure \ref{fig:cover}. Overall, NNgHMC still outperforms SGHMC in terms of median EES per second. 

\begin{table}[h!] \centering 
  \caption{Experiment results on Cover Type data} 
  \label{tab:cover} 
\begin{tabular}{cccccc} 
\\[-1.8ex]\hline 
\hline \\[-1.8ex] 
Method & AP & ESS & CPU time & Median ESS/s & Speed-up \\ 
\hline \\[-1.8ex] 
Standard & 0.80 & (73, 143, 10000) & 3147s & 0.05 & 1 \\
NNg & 0.67 & (57, 186, 7174) & 710s & 0.26 & \textbf{5.77} \\
SG & 0.33 & (49, 59, 246) & 357s & 0.17 & 3.64 \\
\hline \\ [-1.8ex] 
\multicolumn{6}{l}{AP: acceptance probability}\\
\multicolumn{6}{l}{ESS: effective sample size (min, median, max) after removing 10\% burn-in}\\
\hline \hline
\end{tabular}
\end{table} 

\begin{figure}[H]
\centering
\includegraphics[width=0.8\linewidth]{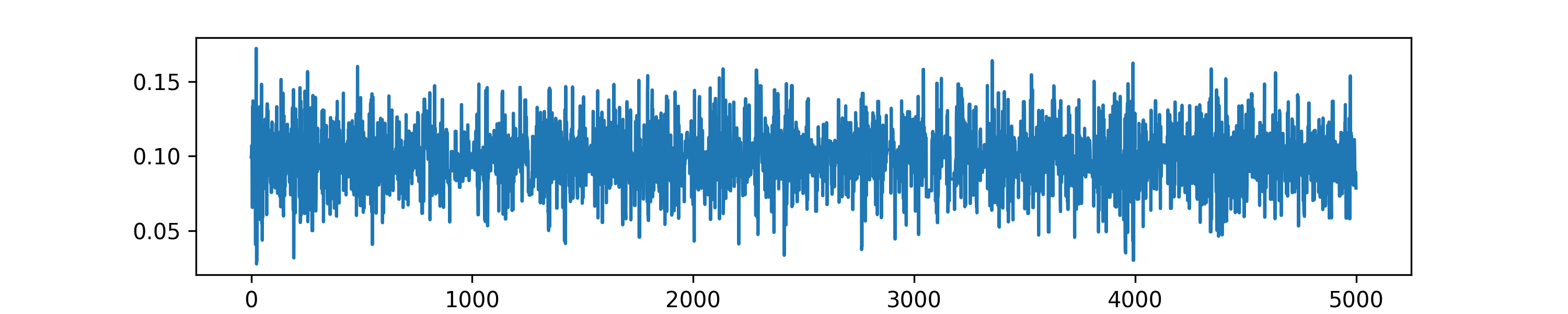}
\includegraphics[width=0.8\linewidth]{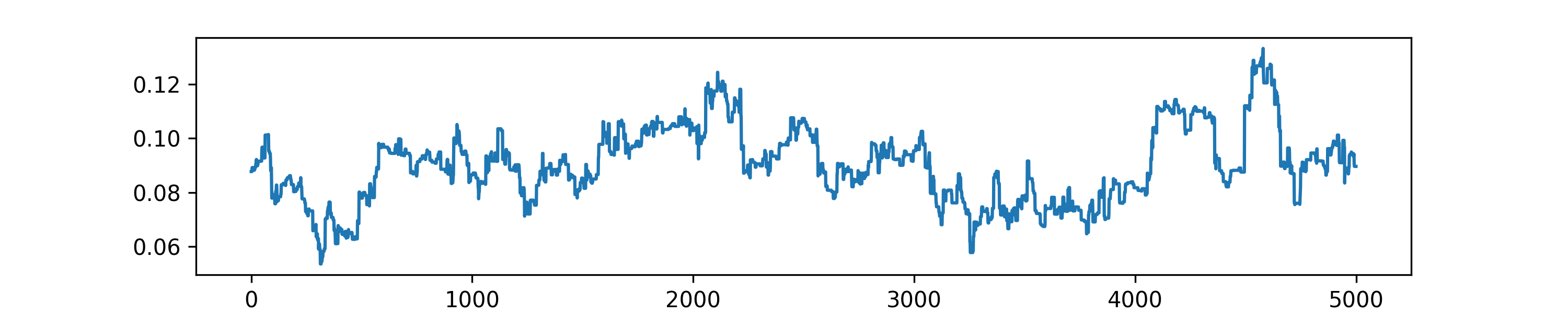}
\caption{Trace plots of NNgHMC (top) and stochastic gradient HMC (below) show that the NNgHMC chain is more efficient as the approximated gradient is more accurate than sub-sampled gradient.}
\label{fig:cover}
\end{figure}

\subsection{Comparison with Gaussian process surrogate}

We now compare our method to the Gaussian process surrogate approach with the squared exponential kernel parametrized by $\mathcal{K}(x,x')=\exp{-\frac{(x-x')^2}{2l^2}}$. We also add white noise $\sigma^2 I$ to the covariance matrix. The squared exponential kernel, the default choice, is infinitely differentiable and gives rise to another Gaussian process as the derivative. Given observations $X$ and $Y$, we can explicitly write down the mean of the derivative at $x^*$.

\begin{align}
E\frac{\partial f}{\partial x^*}=\frac{\partial}{\partial x^*} E f=\frac{\partial}{\partial x^*}\mathcal{K}(x^*,X)\mathcal{K}(X,X)^{-1}Y
\end{align}

Here we estimate both the length scale $l$ in the squared exponential kernel and the white noise parameter $\sigma$ jointly with maximum likelihood. The estimation requires inverting the observed covariance matrix, which is $\mathcal{O}(n^3)$ where $n$ is the number of observations.

We compare with GP surrogate method on multivariate Gaussian distributions with covariance $I_d$ where $d$ varies from 10 to 40. For each Gaussian, we generate $n$ training data points to train the GP surrogate and neural network. The neural network has 100 hidden units and is trained for 10 epochs. After training, both methods are used to draw 1000 samples. 

Table 4 compares acceptance probability for the two methods. We can see that the neural network predicted gradient provides better approximation overall than the gradient of GP as indicated by higher acceptance probability. This advantage is more pronounced as dimensionality increases. 

\begin{table}[H] \centering 
  \caption{Acceptance probability when sampling from multivariate Gaussian} 
  \label{} 
\begin{tabular}{ccccc} 
\\[-1.8ex]\hline 
\hline \\[-1.8ex] 
Method & Dimension / Training & 500 & 1000 & 2000 \\ 
\hline \\[-1.8ex] 
Gaussian process & 10 & 0.65 & 0.61 & 0.57 \\
& 20 & 0.64 & 0.65 & 0.62 \\
& 40 & 0.31 & 0.32 & 0.32 \\
\hline \\[-1.8ex] 
Neural network gradient & 10 & 0.95 & 0.96 & 0.97 \\
& 20 & 0.82 & 0.87 & 0.91\\
& 40 & 0.61 & 0.75 & 0.87 \\
\hline \hline
\end{tabular}
\end{table}

\subsection{Speed evaluation on real data}
Similar to other surrogate methods, NNgHMC has three stages: training data collection, training, and sampling. We have demonstrated that using a neural network can provide accurate approximation of the gradient, however, the effectiveness of our method still needs to be evaluated by time. If the neural network requires too much training data, then it would not reduce computation time. Here we first run standard HMC to draw a desired number of samples (10000) and record time as benchmark. Then we collect different amounts of training data points (10\%, 15\% and 20\% of total number) and use NNgHMC to draw remaining samples. The time to collect training data and train the neural network is included for NNgHMC. As shown in Table 5, 10\% of training data is sufficient for the neural network to learn the gradient and gives the most speed-up. While adding more training data increases the quality of gradient approximation, the computation cost outweighs the benefit of higher acceptance probability. 

\begin{table}[H] \centering 
  \caption{Experiment results on data sets from UCI machine learning repository} 
  \label{} 
\begin{tabular}{cccccc} 
\\[-1.8ex]\hline 
\hline \\[-1.8ex] 
Method & AP & ESS & CPU time & Median ESS/s & Speed-up \\ 
\hline \\[-1.8ex] 
\multicolumn{6}{c}{Online News Popularity ($39797\times44$)}\\
\hline \\[-1.8ex] 
Standard & 0.77 & (777, 2021, 5929) & 3607s & 0.66 & 1 \\
NNg (10\%) & 0.61 & (605, 1416, 4865) & 502s & 2.82 & \textbf{4.27} \\
NNg (15\%) & 0.64 & (620, 1382, 5500) & 678s & 2.04 & 3.09 \\
NNg (20\%) & 0.68 & (700, 1731, 5397) & 854s & 2.03 & 3.08 \\
\hline \\ [-1.8ex] 
\multicolumn{6}{c}{Census Income ($48842\times123$)}\\
\hline \\[-1.8ex] 
Standard & 0.84 & (6306, 9390, 10000) & 1796s & 5.23 & 1 \\
NNg (10\%) & 0.60 & (4023, 6024, 7156) & 564s & 10.68 & 2.04 \\
NNg (15\%) & 0.68 & (4617, 7511, 9201) & 656s & 11.45 & \textbf{2.19} \\
NNg (20\%) & 0.76 & (5036, 7558, 8696) & 740s & 10.21 & 1.95 \\
\hline \\ [-1.8ex] 
\multicolumn{6}{c}{Dota2 Games Results ($102944\times116$)}\\
\hline \\[-1.8ex] 
Standard & 0.75 & (1677, 5519, 8621) & 20760s & 0.27 & 1 \\
NNg (10\%) & 0.59 & (1446, 4197, 6442) & 2903s & 1.45 & \textbf{5.44} \\
NNg (15\%) & 0.70 & (1901, 4865, 7600) & 3911s & 1.24 & 4.59 \\
NNg (20\%) & 0.74 & (2432, 5744, 8860) & 4992s & 1.15 & 4.26 \\
\hline \\ [-1.8ex] 
\multicolumn{6}{l}{AP: acceptance probability}\\
\multicolumn{6}{l}{ESS: effective sample size (min, median, max) after removing 10\% burn-in}\\
\hline \hline
\end{tabular}
\end{table} 

\section{Discussion}

Whereas HMC is helpful for computing large Bayesian models, its repeated gradient evaluations become overly costly for big data analysis.  We have presented a method that circumvents the costly gradient evaluations, not by subsampling data batches but by learning an approximate gradient that is functionally free of the data.  We find that multi-output, feedforward neural networks are ripe for this application: NNgHMC is able to handle models of comparatively large dimensionality.

The NNgHMC algorithm is an important paradigm shift away from the class of surrogate function approximate HMC algorithms, but this shift leaves many open questions. Much work is needed to extend NNgHMC to an on-line, adaptive methodology: what measures of approximation error will be useful criteria for ending the training regime of the algorithm, and are there benefits to iterating between training and sampling regimes? Are there any valid second-order extensions to the NNgHMC algorithm \`a la Riemannian HMC? Finally---and most interestingly---can the representational power of deep neural networks be leveraged for more accurate approximations to the Hamiltonian flow? 

\newpage

\appendix

\bibliographystyle{plain}
\bibliography{paper}

\end{document}